\title{\LARGE \bf
%Symbolic Control with Epistemic Uncertainty
Correct-by-Design Control of Parametric Stochastic Systems*
%Control refinement under epistemic uncertainty
}
\author{Oliver Sch\"{o}n$^{1}$, Birgit van Huijgevoort$^{2}$, Sofie Haesaert$^{2}$, and Sadegh Soudjani$^{1}$% <-this % stops a space
%\author{Author 1-4 $^{1,2}$% <-this % stops a space
\thanks{*This work is supported by
the NWO Veni project CODEC (18244),
the UK EPSRC New Investigator Award CodeCPS (EP/V043676/1),
and the Horizon Europe EIC project SymAware (101070802).
}% <-this % stops a space
\thanks{$^{1}$Oliver Sch\"{o}n and Sadegh Soudjani are with the School of Computing, Newcastle University, Newcastle, NE4 5TG, UK.
        {\tt\small o.schoen2@ncl.ac.uk,sadegh.soudjani@ncl.ac.uk}}%
\thanks{$^{2}$Birgit van Huijgevoort and Sofie Haesaert are with the Electrical Engineering Department, TU Eindhoven, The Netherlands
        {\tt\small b.c.v.huijgevoort@tue.nl, s.haesaert@tue.nl}}%
}
\newcommand{\Refi}{\mathbf{s}}
\newtheoremstyle{theoremdd}% name of the style to be used
{\topsep}% measure of space to leave above the theorem. E.g.: 3pt
{\topsep}% measure of space to leave below the theorem. E.g.: 3pt
{\itshape}% name of font to use in the body of the theorem
{0pt}% measure of space to indent
{\bfseries}% name of head font
{:}% punctuation between head and body
{ }% space after theorem head; " " = normal interword space
{\thmname{#1}\thmnumber{ #2}\boldmath\textbf{\thmnote{ (#3)}}} %{\thmname{#1}\thmnumber{ #2}\textnormal{\thmnote{ (#3)}}}
\theoremstyle{theoremdd}
\newtheorem{problem}{Problem}
\newtheorem{definition}{Definition}
\newtheorem{theorem}{Theorem}
\newtheorem{proposition}{Proposition}
\newenvironment{prob}
{\begin{mdframed}
		\begin{problem}}
		{\end{problem}\vspace{.4em}\end{mdframed}}
\newtheorem{remark}{Remark}
\newcommand{\R}{\mathcal R} % Simulation relation
\newcommand{\norm}[1]{\left\lVert#1\right\rVert} % Norm
\newcommand{\cdf}[1]{\mathrm{cdf}\left( #1 \right)}
\newcommand{\offset}{\gamma}
\newcommand{\T}{^\top}
\newcommand{\dxp}{dx^+}
\newcommand{\dxhp}{d\hat x^+}
\newcommand{\xp}{x^+}
\newcommand{\xhp}{\hat x^+}
\renewcommand{\U}{\mathbb{U}}
\newcommand{\U}{\mathbb{U}}
\newcommand{\Uh}{\widehat{\mathbb{U}}}
\newcommand{\given}{\;|\;}
\newcommand{\fdel}[1]{{f}_\delta(#1)}
\newcommand{\fhdel}[1]{\hat{{f}}_\delta (#1)}
\newcommand{\Z}{\mathbb{Z}}
\newcommand{\absstwo}{{\Mh\rightarrow\Mt}}
\newcommand{\satisfies}{\vDash}
\newcommand{\Tr}{\mathbf{t}}
\newcommand{\Trh}{\hat{\mathbf{t}}}
\newcommand{\X}{\mathbb{X}}
\newcommand{\Xh}{\hat{\mathbb{X}}}
\newcommand{\x}[1]{{x}_{#1}}
\newcommand{\xs}{\mathbf{x}}
\newcommand{\xin}{ {x}_0}
\newcommand{\M}{\mathbf M}
\newcommand{\Mh}{\widehat{\mathbf M}}
\newcommand{\Mt}{\widetilde{\mathbf M}}
\newcommand{\A}{\mathbb{U}}
\newcommand{\Ah}{\hat{\mathbb{U}}}
\newcommand{\y}[1]{y_{#1}}
\newcommand{\ac}[1]{u_{#1}}
\newcommand{\acs}{\mathbf{u}}
\newcommand{\rel}{\mathcal{R}}
\newcommand{\Hist}{\mathbb{H}}
\newcommand{\CA}[1]{\mathcal{#1}}
\newcommand{\ind}{\mathbf 1}
\newcommand{\AP}{\mathsf{AP}}
\newcommand{\notltl}{\neg}
\newcommand{\andltl}{\wedge}
\newcommand{\orltl}{\vee}
\newcommand{\Next}{\ensuremath{\bigcirc}}
\newcommand{\Event}{\ensuremath{\ \diamondsuit\ }}
\newcommand{\Until}{\ \CA{U}\ }
\newcommand{\alphabeth}{\Sigma}
\newcommand{\word}{\boldsymbol{\pi}}
\newcommand{\letter}{l}
\newcommand{\True}{\operatorname{\mathsf{true}}}
\newcommand{\meas}{\nu}     % Probability of an event
\newcommand{\po}{p}     % Probability of an event
\newcommand{\pok}{\mathbf{p}}     % Probability kernel of an event
\newcommand{\pk}[1]{\pok\left(#1\right)}     % Probability kernel of an event
\newcommand{\borel}[1]{\mathcal{B}\left(#1\right)}
\newcommand{\eps}{\varepsilon}
\newcommand{\InF}{\mathbf{i}} %\mathcal{F}_{u}
\newcommand{\Ca}{{\mathbf{C}}}
\newcommand{\Cah}{\widehat{\mathbf{C}}}
\newcommand{\Y}{\mathbb{Y}}%{The set of output values}     % Observation space
\newcommand{\support}{\operatorname{supp}}
\newcommand{\W}{v}
\newcommand{\Wt}{\boldsymbol{v}}
\newcommand{\Vb}{V} % Value function derivate
\newcommand{\trans}{\tau}
\newcommand{\Lim}{\mathbf L}
\newcommand{\xh}[1]{\hat{x}_{#1}}
\newcommand{\uh}[1]{\hat{u}_{#1}}
\newcommand{\ach}[1]{\hat{u}_{#1}}
\newcommand{\proj}{\Refi}
\renewcommand{\P}{\mathbb{P}}
\newtcbox{\blueb}{nobeforeafter,tcbox raise base,boxrule=0.4pt,top=0mm,bottom=0mm,
	right=0mm,left=0mm,arc=1pt,boxsep=2pt,before upper={\vphantom{dlg}},
	colframe=blue!50!black,coltext=black!25!black,colback=blue!10!white}
\newtcbox{\redb}{nobeforeafter,tcbox raise base,boxrule=0.4pt,top=0mm,bottom=0mm,
	right=0mm,left=0mm,arc=1pt,boxsep=2pt,before upper={\vphantom{dlg}},
	colframe=red!50!black,coltext=black!25!black,colback=red!10!white}
\newtcbox{\bluebs}{nobeforeafter,tcbox raise base,boxrule=0.4pt,top=0mm,bottom=0mm,
	right=0mm,left=0mm,arc=1pt,boxsep=.5pt,before upper={\vphantom{dlg}},
	colframe=blue!50!black,coltext=black!25!black,colback=blue!10!white}
\newtcbox{\redbs}{nobeforeafter,tcbox raise base,boxrule=0.4pt,top=0mm,bottom=0mm,
	right=0mm,left=0mm,arc=1pt,boxsep=.5pt,before upper={\vphantom{dlg}},
	colframe=red!50!black,coltext=black!25!black,colback=red!10!white}
\newcommand{\red}[1]{{\color{red} #1}}
\definecolor{lightblue}{rgb}{0.67, 0.9, 0.93}
\definecolor{lightgreen}{rgb}{0.67, 0.88, 0.69}
\definecolor{lightpink}{rgb}{1.0, 0.72, 0.77}
\definecolor{lightpurple}{rgb}{0.96, 0.73, 1.0}
\definecolor{lightyellow}{rgb}{0.98, 0.93, 0.37}
\newcommand{\new}[1]{{\color{blue!80!black} #1}}
\newcommand{\Sadegh}{\textcolor{magenta}}
\definecolor{cadmiumgreen}{rgb}{0.0, 0.42, 0.24}
\begin{document}

\maketitle
\thispagestyle{empty}
\pagestyle{empty}

%%%%%%%%%%%%%%%%%%%%%%%%%%%%%%%%%%%%%%%%%%%%%%%%%%%%%%%%%%%%%%%%%%%%%%%%%%%%%%%%
%\Oliver{Updates:
%\begin{itemize}
%	\item Unified $\models$ and $\vDash$ to $\satisfies$.
%\end{itemize}}

\begin{abstract}
%When designing a controller for a system, we often have limited information about the system dynamics.
This paper addresses the problem of computing controllers that are correct by design for safety-critical systems and can provably satisfy (complex) functional requirements. We develop new methods for models of systems subject to both stochastic and parametric uncertainties. We provide for the first time novel simulation relations for enabling correct-by-design control refinement, that are founded on coupling uncertainties of stochastic systems via sub-probability measures. Such new relations are essential for constructing abstract models that are related to not only one model but to a set of parameterized models. We provide theoretical results for establishing this new class of relations and the associated closeness guarantees for both linear and nonlinear parametric systems with additive Gaussian uncertainty. The results are demonstrated on a linear model and the nonlinear model of the Van der Pol Oscillator.
%In this paper, we develop a new method that enables symbolic correct-by-control design via abstraction and refinement for models with \emph{epistemic uncertainty}.
%In this work, we aim to address the lack of full knowledge of the probability distributions and state transitions, which is called \emph{epistemic uncertainty}.
%\Sadegh{revise the abstract.}
\end{abstract}

%\Oliver{I'd include both keywords "parametric uncertainty" and "epistemic uncertainty" in the topic section so the paper will be found in a search.}

%\Sadegh{The paper needs a spell checking.}

%\Sadegh{Let's use American English in this paper. We need to unify the usage.}

%%%%%%%%%%%%%%%%%%%%%%%%%%%%%%%%%%%%%%%%%%%%%%%%%%%%%%%%%%%%%%%%%%%%%%%%%%%%%%%%
\section{INTRODUCTION}
Engineered systems in safety-critical applications are required to satisfy complex specifications to ensure safe autonomy of the system.
Examples of such application domains include autonomous cars, smart grids, robotic systems, and medical monitoring devices. It is challenging to design control software embedded in these systems with guarantees on the satisfaction of the specifications. This is mainly due to the fact that most safety-critical systems operate in an uncertain environment (i.e., their state evolution is subject to uncertainty) and that their state is comprised of both continuous and discrete variables.

Synthesis of controllers for systems on continuous and hybrid spaces generally does not grant analytical or closed-form solutions even when an exact model of the system is known.
%This has led to the design of numerical approaches.
A promising direction for formal synthesis of controllers w.r.t. high-level requirements is to use formal abstractions \cite{belta2017formal,tabuada09}.
The abstract models are built using model order reduction and space discretizations and are better suited for formal verification and control synthesis due to their finite space being amenable to exact, efficient, symbolic computational methods \cite{BK08,majumdar2020symbolic,SA13}.
Controllers designed on these finite-state abstractions can be refined to the respective original models by leveraging (approximate) similarity relations and control refinements \cite{haesaert2020robust}.

In the past two decades, formal controller synthesis for stochastic systems has witnessed a growing interest.
%\Sadegh{expand this part with key references.}
The survey paper \cite{lavaei2021automated} provides an overview of the current state of the art in this line of research.
Unfortunately, most of the available results require prior knowledge of the exact stochastic model of the system, which means any guarantee on the correctness of the closed-loop system only holds for that specific model. With the ever-increasing use of data-driven modeling and systems with learning-enabled components we are able to construct accurate parameterized models with the associated confidence bounds. This includes confidence sets that capture the model uncertainty via either Bayesian inference or frequentist approaches. The former includes representations of confidence sets via Bayesian system identification \cite{prando2016classical}, while the latter include non-asymptotic confidence set computations \cite{campi2005guaranteed,khorasani2018non,kieffer2013guaranteed}.

Although existing results on similarity quantification can be extended to relate a model with a model set \cite{haesaert2020robust}, these results would lead to controllers parameterized in a similar fashion as the model set.
%\Oliver{ and hence only gives correct results if the true parametrization is known}.
To design a single controller that works homogeneously for all models in the model set, a new type of simulation relation needs to be developed.
In this paper, we provide an abstraction-based approach that is suitable for stochastic systems with model parametric uncertainty.
%Synthesizing controllers on learned physical models that are provably correct is an important research challenge in the domain of formal methods for control engineering.
%When considering synthesis and verification of controllers for uncertain stochastic systems, the singular question of control design becomes a set of questions related to the existence of a controller as a function of the true model, the existence of a controller that holds uniformly for all models, or even the existence of a controller that holds for an optimized large set of possible models.
 %In this paper, we will focus on the second option. That is
We assume that we are given an \emph{uncertainty set} that contains the true parameters, and design a controller such that the controlled system satisfies a given temporal specification uniformly on this uncertainty set without knowing the true parameters. %More precisely, we consider the following problem.
%\fbox{\begin{minipage}{.9\columnwidth}
%		Can we design a controller  such that the controlled system satisfies a given temporal logic specification with at least probability $p$ if the true model belongs to a given set of models? %   for all $\theta\in\Theta$ with probability at least $p_\psi$ ?
%\end{minipage}}s
%\mbox{ }\\[-.4em]

%\Oliver{Make problem boxes boring.}
\begin{prob}\label{prob:prob1}
	Can we design a controller such that the controlled system satisfies a given temporal logic specification with at least probability $p$ if the unknown true model belongs to a given set of models?
\end{prob}

The main contribution of this paper is to provide an answer to this question for the class of parameterized discrete-time stochastic systems and the class of syntactically co-safe linear temporal logic (scLTL) specifications. We define a novel simulation relation between a class of models and an abstract model, which is founded on coupling uncertainties in stochastic systems via sub-probability measures. We provide theoretical results for establishing this new relation and the associated closeness guarantees for both linear and nonlinear parametric systems with additive Gaussian uncertainty.
%Throughout this, we discuss both control design and refinement.

\smallskip

% Build up of the paper
%\Sadegh{update this para at the end.}
The rest of the paper is organized as follows.
After reviewing related work, we introduce in Sec.~\ref{sec:prelem} the necessary notions to deal with the stochasticity and uncertainty. We also give the class of models, the class of specifications, and the problem statement.
In Sec.~\ref{sec:framework}, we introduce our new notions of sub-simulation relations and control refinement that are based on partial coupling. We also show how to design a controller and use this new relation to give lower bounds on the satisfaction probability of the specification.
In Sec.~\ref{sec:establish_relation}, we establish the relation between parametric linear and nonlinear models and their simplified abstract models.
Finally, we demonstrate the application of the proposed approach on a linear system and the nonlinear Van der Pol Oscillator in Sec.~\ref{sec:case_study}. Due to space limitations, the proofs of statements will be provided in an extended journal version.
%before concluding the paper in Sec.~\ref{sec:concl}.

\medskip

\noindent\textbf{Related work:}
%\new{\emph{Parametric deterministic systems:}}
There are several approaches developed for parametric systems with no stochastic state transitions  \cite{haesaert2017data,makdesi_data,salamati2020data}.
%\new{Adaptive control for continuous-time control-affine parametric systems using so-called unmatched control barrier functions is studied in \cite{Lopez2022UnmatchCBC}\pdfmargincomment[color=red]{No abstraction needed?!}. Analogous to our approach, the authors generate a family of control barrier functions and use online parameter adaptation to select a safety controller from the obtained set. However, they only achieve forward invariance and don't look at linear temporal logic specifications.}
%\red{Connect the blue part with the previous part. E.g., Going beyond single shot approaches, adaptive approaches such as in [..] adapt to the online estimation of the model parameter to give a safety controller. }\Oliver{Need to look at the related work section later.}
%
%
%
%\new{\emph{Uncertain stochastic systems:}} 
One approach to dealing with epistemic uncertainty in stochastic systems is to model it as a stochastic two-player game, where the objective of the first player is to create the best performance considering the worst-case epistemic uncertainty. The literature on solving stochastic two-player games is relatively mature for finite state systems \cite{chatterjee2016perfect,chatterjee2012survey}.
%\red{[sofie:]Make sure that references appear in order so [5], [6] and not [6], [5]}\Oliver{Done.}
There is a limited number of papers addressing this problem for continuous-state systems. The papers \cite{majumdar2021symbolic,majumdar2020symbolic} look at satisfying temporal logic specifications on nonlinear systems utilizing mu-calculus and space discretization.
%The results in \cite{majumdar2021symbolic,majumdar2020symbolic} assumes access to the full state and hence are incompatible with model order reduction techniques.
Our approach builds on the concepts presented in \cite{haesaert2017verification} to design controllers for stochastic systems with parametric uncertainty that is compatible with both model order reduction and discretization.

\section{PRELIMINARIES AND PROBLEM STATEMENT}
\label{sec:prelem}
\subsection{Preliminaries}
%\red{reorder the stuff here to make the best flow based on what we have in the paper}

The following notions are used.
A measurable space is a pair $(\X,\mathcal{ F})$ with sample space $\X$ and $\sigma$-algebra $\mathcal{F}$ defined over $\X$,
 which is equipped with a topology.
 As a specific instance of $\mathcal F$, consider
%In this work, we only consider
 Borel measurable spaces, i.e., $(\X,\mathcal{B}(\X))$, where $\mathcal{B}(\X)$ is the Borel $\sigma$-algebra on $\X$, that is the smallest $\sigma$-algebra containing open subsets of $\X$.
In this work, we restrict our attention to Polish spaces~\cite{bogachev2007measure}. % to have well-defined measurable events over unbounded trajectories of systems.
A positive \emph{measure} $\meas$ on $(\X,\mathcal{B}(\X))$ is a non-negative map
$\nu:\mathcal{B}(\X)\rightarrow \mathbb R_{\ge 0}$ such that for all countable collections $\{A_i\}_{i=1}^\infty$ of pairwise disjoint sets in $\mathcal{B}(\X)$
it holds that
%every two disjoint events $A,B\in \mathcal{F}$: $\p{A\cup B}=\p{A}+\p{B}$.
$\nu({\bigcup_i A_i })=\sum _i \nu({A_i})$.
A positive measure $\nu$ is called a \emph{probability measure} if $\nu(\X)=1$, and is called a \emph{sub-probability measure} if $\nu(\X)\leq 1$.
%A \emph{{(sub-)}probability measure} $\p{\cdot}$ is a measure $\po:\mathcal{B}(\X)\rightarrow [0,1]$ for which $\p{\X}=1$ (respectively, $\p{\X}\leq 1$).
%\red{The \emph{support} of a sub-probability measure $\po$ is the set of all points $x\in\X$ for which every open neighborhood of $x$ has a positive measure. This is denoted as $\support(\po)\subset \X$.}\pdfmargincomment[color=red]{Not used anymore in the paper. Consider removing}
%\Oliver{@Sadegh: Please specify that "$\subset$" is is not meant as a strict subset.}

A probability measure $\po$ together with the measurable space $(\X,\mathcal{B}(\X))$ defines a \emph{probability space} denoted by $(\X,\mathcal{B}(\X),\po)$ and has realizations  $x\sim \po$.
%\pdfmargincomment{To stay uniform with the rest of the paper, we could use $p_x(\cdot)$ instead of $p$.}
We denote the set of all probability measures for a given measurable space $(\X,\mathcal{B}(\X))$ as $\mathcal P (\X)$.
For two measurable spaces $(\X,\mathcal{B}(\X))$ and $(\Y,\mathcal{B}(\Y))$, a \emph{kernel} is a mapping $\pok: \X \times \mathcal B(\Y)\rightarrow \mathbb R_{\geq 0}$
%\pdfmargincomment[color=red]{Why whole R?} 
such that $\pk{x,\cdot}:\mathcal B(\Y)\rightarrow\mathbb{R}_{\geq 0}$ is a measure for all $x\in\X$, and $\pk{\cdot, B}: \X\rightarrow \mathbb R_{\geq 0}$ is measurable for all  $B\in\mathcal B(\Y)$.
%\pdfmargincomment{Notation of kernel once with and once without comma?!}
A kernel associates to each point $x\in\X$ a measure denoted by $\pk{\cdot|x}$.
We refer to $\pok$ as a (sub-)probability kernel if in addition $\pk{\cdot|x}:\mathcal B(\Y)\rightarrow [0,1]$ is a (sub-)probability measure.
The \emph{Dirac delta} measure $\delta_a:\mathcal{B}(\X)\rightarrow [0,1]$ concentrated at a point $a\in\X$ is defined as $\delta_a(A)=1$ if $a\in A$ and $\delta_a(A)=0$ otherwise, for any measurable $A$.
%\begin{equation*}
%	\delta_a(A) := \begin{cases}
%		1 \text{, if $a\in A$}\\
%		0  \text{, otherwise,}
%	\end{cases}
%\end{equation*}
%for all measurable $A$.

For given sets $A$ and $B$, a relation $\rel\subset A\times B$ is a subset of the Cartesian product $A\times B$. The relation $\rel$ relates $x\in A$ with $y\in B$ if $(x,y)\in\rel$, written equivalently as $x\rel y$.
%\new{If $x\rel y$, we denote $x=\R(y)$ and $y=\R^{-1}(x)$.}
%We use the following notation for the mappings $\rel(\tilde A):=\{y: x\rel y,\  x\in \tilde A\}$ and $\rel^{-1}(\tilde B):=\{x: x\rel y,\  y\in \tilde B\}$  for $\tilde A\subset A$ and $\tilde B\subset B$.
For a given set $\Y$, a metric or distance function $\mathbf d_\Y$ is a function $\mathbf{d}_\Y: \Y\times \Y\rightarrow \mathbb R_{\ge 0}$
satisfying the following conditions for all $y_1,y_2,y_3\in\Y$:
$\mathbf d_\Y(y_1,y_2)=0$ iff $y_1=y_2$;
$\mathbf d_\Y(y_1,y_2)=\mathbf d_{\Y}(y_2,y_1)$;  and
$\mathbf d_\Y(y_1,y_3)\leq \mathbf d_\Y(y_1,y_2) +\mathbf d_\Y(y_2,y_3)$.
%\new{We denote the determinant of a matrix $A$ as $\left|A\right|$ and the identity matrix as $I$.}

%\section{Problem statement: control synthesis for uncertain stochastic systems}
%\label{sec:prob}
%\red{This section seems to be fine.}

\subsection{Discrete-time uncertain stochastic systems}
%\Sadegh{re-order this subsection properly.}
%\Oliver{@Sadegh: Connect and elaborate on the connection/notation of the models in this section and the beginning of section 3. Maybe use $\M_{\mathfrak n}$.}
We consider discrete-time nonlinear systems perturbed by additive stochastic noise under model-parametric uncertainty. This modeling formalism is essential if we can only access an uncertain model of this stochastic system.
Consider the set of models $\{\M(\theta)$ with $\theta\in\Theta\}$, parametrized with $\theta$:
\begin{equation}
\label{eq:model}
	\M(\theta): \left\{ \begin{array}{ll}
		x_{k+1}&= f(x_k,u_k;\theta) + w_k\\
		y_k& = h(x_k\red{}),%\pdfmargincomment[color=red]{We do not consider such systems!}
	\end{array} \right.
\end{equation}
where the system state, input, and observation at the $k^{\text{th}}$ time-step are denoted by $x_k, u_k, y_k$, respectively. The functions $f$ and $h$ specify, respectively, the parameterized state evolution of the system and the observation map. The additive noise is denoted by $w_k$, which is an independent, identically distributed noise sequence with distribution $w_k\sim p_w(\cdot)$.
%We assume that the parameter set $\Theta$ is a bounded polytope.
%\new{We impose the following assumption:
%\begin{assumption}\label{asm:uncertparam}
%	The uncertain parameter $\theta$ belongs to a known, bounded polytope $\Theta 
%	\pdfmargincomment{should we add that Theta is a subset of R\^{}p for example? That emphasizes that it is more than one scalar parameter}
%	\subset\mathbb{R}^{\mathfrak{p}}
%	$.
%\end{assumption}
%}
We assume throughout this paper that the uncertain parameter $\theta$ belongs to a known, bounded polytope $\Theta\subset\mathbb{R}^{\mathfrak{p}}$ for some $\mathfrak{p}$.

%\red{It doesnt seem necessary to cite [19] for this. Requiring $\Theta$ to be a bounded set is quite common. Also the requirement that it is closed it quite unnecessary, as you could always just work with the closure of the set. I think the original version was quite fine:"We assume that the parameter set $\Theta$ is a known, bounded polytope."}\Oliver{Done.}

%
A controller for the model \eqref{eq:model} is denoted by $\Ca$ and is implemented as depicted in Fig.~\ref{fig:control}. We denote the composition of the controller $\Ca$ with the model $\M(\theta)$ as $ \Ca\times \M(\theta)$.

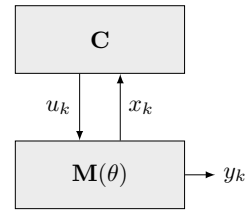
\begin{figure}[htp]
	\centering
	\scalebox{0.9}{	\begin{tikzpicture}
	
			\node[draw,fill=gray!15, rectangle, minimum width=2.5cm,node distance=3cm, minimum height=1cm] (M2) {$\M(\theta)$};
		
 			\node[above of =M2,draw,fill=gray!15, rectangle, minimum width=2.5cm,node distance=2cm, minimum height=1cm] (Ca2) {$\Ca$};
			\node[right of =M2, node distance= 2cm](out2){$\y{k}$};
			\path[draw] (Ca2.south) ++(.3cm,0)  edge[latex-,right] node{$ \x{k}$} ([xshift=.3cm]Ca2.south |- M2.north);
			\path[draw] (Ca2.south) ++(-.3cm,0)  edge[-latex,left] node{$\ac{k}$} ([xshift=-0.3cm]Ca2.south |- M2.north);
			\path[draw] (M2)  edge[-latex,left]   (out2);

		\end{tikzpicture}
	}
	\caption{Control design for a parameterized stochastic model.}
	\label{fig:control}
\end{figure}

%or a single value of parameter, we get the model studied in many publications.

\subsection{Problem statement}\label{sec:prob_statem}
Consider a parameterized model in \eqref{eq:model} with $\theta\in\Theta$.
We are interested in designing a controller $\Ca$ to satisfy temporal specifications $\psi$ on the output of the model. This is denoted by $\Ca\times \M(\theta) \satisfies \psi$. Since the true $\theta$ is unknown, can we design a controller that does not depend on $\theta$ and that ensures the satisfaction of $\psi$ with at least probability $p_\psi$? We formalize this problem as follows:

%\fbox{\begin{minipage}{7cm}
%	Can we design a controller $\mathbf C$ such that $ \Ca\times \M(\theta) \satisfies \psi$ for all $\theta\in\Theta$ with probability at least $p_\psi$ ?
%	\end{minipage}}
\begin{prob}\label{prob:prob2}
	\setlength{\belowdisplayskip}{0pt}
	 For a given specification $\psi$ and a threshold $p_\psi\in (0,1)$, design a controller $\mathbf C$ for $\M(\theta)$ that does not depend on the parameter $\theta$ and that
	 \[ \P\left(\Ca\times \M(\theta) \satisfies \psi \right)\geq   p_\psi,\quad  \forall \theta\in\Theta. \]
%	 test
\end{prob}
%\red{[Sofie:] I removed the notion of parameterized stochastic models for [10], since we do not work with that  in [10].}
%\begin{remark}
The controller synthesis for stochastic models is studied in \cite{haesaert2020robust} through coupled simulation relations. Although these simulation relations can relate one abstract model to a set of parameterized models $\M(\theta)$, these relations would lead to a control refinement that is still dependent on the true model or true parameter. Therefore, this approach is unfit to solve Prob.~\ref{prob:prob2}, since the required true parameter $\theta$ is unknown.
As one of the main contributions of this paper, we start from a parameter-independent control refinement and compute a novel simulation relation based on a sub-probability coupling (see Sec.~\ref{sec:framework}) to synthesize a single controller for all $\theta\in\Theta$.
%\new{Using this, we are able to synthesize a single controller that satisfies the specification uniformly on all models $M(\theta)$ in the set of parametrized models $\{\M(\theta)$ with $\theta\in\Theta\}$.}
%\end{remark}

\subsection{Markov decision processes}\label{sec:MDPs}
Consider an abstract model $\Mh$, based on which we wish to design a controller:
\begin{equation}
\label{eq:nom_dynamics}
\Mh:
 \left\{ \begin{array}{ll}
	\hat x_{k+1} &= \hat f_{\mathfrak n}(\hat x_k,\hat u_k) + \hat w_k,\quad \hat w_k\sim \hat p_{\hat w}(\cdot),\\
\hat y_k &= \hat h_{\mathfrak n}(\hat x_k).
\end{array} \right.
% \label{eq:observations}
%
\end{equation}
%\red{(@Sofie: Use $\hat x, \hat y$ here instead?)} \red{[sofie]ok} \Oliver{Added hats here and in the following. Seems to be not fully unambiguous}\\
%\new{where $\hat w_k\sim \hat p_{\hat w}(\cdot)$.}
This abstract model could for example be $\M(\theta)$ but for a given valuation of parameters $\theta = \theta_{\mathfrak n}$, or any other model constructed by space reduction or discretization. The systems $\Mh$ in \eqref{eq:nom_dynamics} and $\M(\theta)$
%\new{in \eqref{eq:model}}
with a given fixed $\theta$ can equivalently be described by a general Markov decision process, studied previously for formal verification and synthesis of controllers \cite{haesaert2017verification,SA13}. Given $\hat x_k, \hat u_k$, we can model the stochastic state transitions of  $\Mh$ with a probability kernel $\hat \Tr(\cdot|\hat x_k,\hat u_k)$ that is computed based on $\hat f_{\mathfrak n}$ and $\hat p_{\hat w}(\cdot)$ (similarly for $\M(\theta)$). This leads us to the representation of the systems as Markov decision processes, which is defined next.

%\Oliver{Talk about both models here.}
%\new{With the probability kernel $\Tr(\cdot|x_k,u_k)$, we have reformulated the recursive notion of the stochastic dynamical system $\M$ into a Markov decision process.}% also denoted here by $\M$ and is defined next.}
%\red{Mention that $\theta$ is unknown for policy. Differentiate this paper from previous works by highlighting the differences.}

%\Sadegh{replace `action' with `input'.} \Oliver{done}
%
%\Sadegh{Consider replacing $\A$ with $\mathbb U$.} \Oliver{done}
%
%\Sadegh{Trace vs path vs execution vs sequence. I suggest we use input sequence, state execution, output trace.} \Oliver{done}
%
%\Sadegh{I am getting rid of $z$ and $\Z$ and replace them by $y,\Y$.} \Oliver{done}
%\Sadegh{combine the two definitions?}
\begin{definition}[{Markov decision process (MDP)}]
	An MDP is a tuple $\M=(\X,x_0,\A,\Tr )$ with %\\[-1.2em]
	$\X$ the state space with states $x\in\X$; % as its elements;
	$x_0\in\X$ the initial state;
	$\A$  the input space with input $u\in\A$;
	and
	%	\item $\pi$, the initial probability measure $\pi:\mathcal{B}(\X)\rightarrow [0,1]$;
	$\Tr:\X\times\A\times\mathcal B(\X)\rightarrow[0,1]$, a probability kernel assigning to each state $x\in \X$ and input $u\in \A$ pair a probability measure $ \Tr(\cdot| x,u)$ on $(\X,\mathcal B(\X))$. %\hfill$\qed$ % and
	%$\h:\X\rightarrow\Y$, a measurable output map.
\end{definition}

%We denote the class of all MDPs with the same metric output space $\Y$ as $\mathcal{M}_\Y$.
We indicate the input sequence of an MDP $\M$ by $\acs:= \ac{0},\ac{1},\ac{2},\ldots$
%$u_{\bullet}:\mathbb N\rightarrow\A$ is given, where $\mathbb N:=\{0,1,2,\ldots\}$ is the set of natural numbers.
and we define its (finite) \emph{executions} as sequences of states $\xs=\x{0},\x{1},\x{2},\ldots$ (respectively, $\xs_N=\x{0},\x{1},\x{2},\ldots, \x{N}$) initialised with the initial state  $\xin$ of $\M$ at $k=0$.
In each execution, the consecutive state $x_{k+1}\in\X$
is obtained as a realization $x_{k+1}\sim\Tr\left(\cdot| x_k, u_k \right)$ of the controlled Borel-measurable stochastic kernel. % $\Tr\left(\cdot\mid x_t, u_t \right)$.
%Thus, an execution of $\M$ is a time sequence of these states $x_{\bullet}:\mathbb N\rightarrow\X$. % is referred to as a path or execution.
Note that for a parametrized MDP $\M(\theta)$ its transition kernel also depends on $\theta$ denoted as $\Tr\left(\cdot| x_k, u_k; \theta \right)$.

As in Eq.~\eqref{eq:nom_dynamics}, we can assign an output mapping $h:\X\rightarrow \Y$ and a metric $\mathbf d_{\Y}$ to the MDP $\M$ to get a general MDP.
\begin{definition}[General Markov decision process (gMDP)]
	A gMDP is a tuple $\M\!=\!(\X,x_0,\A, \Tr, h,  \Y)$ that combines an MDP $(\X,x_0,\A, \Tr)$ with the output space $\Y$ and a measurable output map $h:\X\rightarrow\Y$.
	%We focus on
A metric $\mathbf d_\Y$ decorates the output space $\Y$. % \hfill$\qed$  %decorated with .
\end{definition}
The gMDP semantics are directly inherited from those of the MDP.
Furthermore, output traces of the gMDP are obtained as mappings of (finite) MDP state executions, namely
$\mathbf y:= y_0, y_1, y_2,\ldots$ (respectively, $\mathbf y_N:= y_0, y_1, y_2,\ldots, y_N$),
where $y_k= h(x_k)$. 
%\red{Denote the class of all gMDPs with the metric output space $\Y$ as $\mathcal{M}_\Y$.} \pdfmargincomment[color=red]{Not used. Consider removing}%\\*
%
The execution history $(x_0, u_0, x_1,\ldots, u_{N-1}, x_N)$ grows with the number of observations $N$ and takes values in the \emph{history space} $\Hist_N := (\X \times \A )^{N} \times \X$.
A control policy or controller for $\M$ is a sequence of policies mapping the current execution history to a control input.
\begin{definition}[Control policy]
	\label{def:markovpolicy}
A control policy $\boldsymbol{\mu}$ is a sequence $\boldsymbol{\mu}=(\mu_0,\mu_1,\mu_2,\ldots)$ of universally measurable maps $\mu_k:\Hist_k\rightarrow \mathcal P(\A,\mathcal B(\A))$, $k\in\mathbb N:=\{0,1,2,\ldots\}$, from the execution history to the input space.
\end{definition}
 As special types of control policies, we differentiate Markov policies and finite memory policies.
%\begin{definition}[Markov policy]
%	\label{def:markovpolicy}
A \emph{Markov policy} $\boldsymbol{\mu}$ is a sequence $\boldsymbol{\mu}=(\mu_0,\mu_1,\mu_2,\ldots)$ of universally measurable maps $\mu_k:\X\rightarrow \mathcal P(\A,\mathcal B(\A))$, $k\in\mathbb N$, from the state space $\X$ to the input space.
%\end{definition}
We say that a Markov policy is \emph{stationary}, if $\boldsymbol{\mu}=(\mu,\mu,\mu,\ldots)$ for some $\mu$.
\emph{Finite memory policies} first map the finite state execution of the system to a finite set (memory). The input is then chosen similar to the Markov policy as a function of the system state and the memory state. This class of policies is needed for satisfying temporal specifications on the system executions.
 %In the sequel, we will consider finite memory policies and stationary Markov policies.
%
%\Sadegh{@Sofie: Do we need to mention ``finite memory policies"? We have to define them, or say something intuitively.}\Sofie{@Sadegh say something intuitive?}

%\Sadegh{Specify the objective of this paper on how to provide an answer for the problem.}

In the next subsection, we formally define the class of specifications studied in this paper.
%and the associated satisfaction relation.

\subsection{Temporal logic specification}
Consider a set of atomic propositions $AP := \{ p_1, \ldots, p_L \}$ that defines an \emph{alphabet} $\alphabeth := 2^{AP}$, where any \emph{letter} $\letter\in\alphabeth$ is composed of a set of atomic propositions.  An infinite string of letters forms a \emph{word} $\word=\letter_0\letter_1\letter_2\ldots\in\alphabeth^{\mathbb{N}}$. We denote the suffix of $\word$ by $\word_i = \letter_i\letter_{i+1}\letter_{i+2}\ldots$ for any $i\in\mathbb N$.
Specifications imposed on the behavior of the system are defined as formulas composed of atomic propositions and operators. We consider the syntactically co-safe subset of linear-time temporal logic properties \cite{kupferman2001model} abbreviated as scLTL.
%
%\begin{definition}[LTL]
%  \label{def:LTL}
%  Formulas given as linear time temporal logic properties are constructed according to the syntax
%  \begin{equation*}
%    \label{eq:LTL}
%    \psi ::=  p \ |\ \notltl \psi \ |\ \psi_1 \vee\psi_2  \ |\ \psi_1 \andltl \psi_2 \ |\ \psi_1 \Until \psi_2 \ |\ \Next \psi,
%  \end{equation*}
%\end{definition}
This subset of interest consists of temporal logic formulas constructed according to the following syntax
\begin{equation*}
%	\label{eq:scLTL}
	\psi ::=  p \ |\ \notltl p \ |\ \psi_1 \vee\psi_2  \ |\ \psi_1 \andltl \psi_2 \ |\ \psi_1 \Until \psi_2 \ |\ \Next \psi,
\end{equation*}   where $p\in \AP$ is an atomic proposition.
%\begin{definition}
The \emph{semantics} of scLTL are defined recursively over $\word_i$ as
% \begin{itemize}
$\word_i \satisfies p$ iff $p \in \letter_i$;
$\word_i \satisfies \psi_1 \andltl  \psi_2  $ iff $ ( \word_i \satisfies \psi_1 ) \andltl ( \word_i \satisfies \psi_2 ) $;
$\word_i \satisfies \psi_1 \orltl  \psi_2  $ iff $ ( \word_i \satisfies \psi_1 ) \orltl ( \word_i \satisfies \psi_2 ) $;
$\word_i \satisfies  \psi_1 \Until \psi_2 $ iff $\exists j \geq i \text{ subject to } (\word_j \satisfies \psi_2 ) $ and $\word_k \satisfies \psi_1, \forall k \in \{i, \ldots j-1\}$; and
$\word_i \satisfies \Next \psi$ iff $\word_{i+1} \satisfies \psi$.
The eventually operator  $\Event \psi$ is used in the sequel as a shorthand for $\True\Until  \psi $.
We say that $\word\satisfies\psi$ iff $\word_0\satisfies\psi$.

Consider a labeling function $\mathcal L: \Y\rightarrow \Sigma$ that assigns a letter to each output. Using this labeling map, we can define temporal logic specifications over the output of the system.
Each output trace of the system $\mathbf y \!=\! y_0,y_1,y_2,\ldots$ can be translated to a word as $\word \!=\! \mathcal L(\mathbf y)$.
We say that a system satisfies the specification $\psi$ with the probability of at least $p_\psi$ if
$\mathbb P(\word\satisfies \psi) \ge p_\psi.$
When the labeling function $\mathcal L$ is known from the context, we write $\mathbb P(\M\satisfies \psi)$ to emphasize that the output traces of $\M$ are used for checking the satisfaction.

Satisfaction of scLTL specifications can be checked using their alternative representation as deterministic finite-state automata, defined next.
	\begin{definition}[Deterministic finite-state automaton (DFA)]
	\label{def:dfa}
	A DFA is a tuple $\mathcal A = \left(Q,q_0,\Sigma,F,\trans\right)$, where $Q$ is the finite set of locations of the DFA, $q_0\in Q$ is the initial location, $\Sigma$ the finite alphabet, $F\subset Q$ is the set of accepting locations, and $\trans: Q\times\Sigma\rightarrow Q$ is the transition function.
\end{definition}

For any $n\in\mathbb{N}$, a word $\boldsymbol{\omega}_n = \omega_0\omega_1\omega_2\ldots\omega_{n-1}\in\Sigma^n$ is accepted by a DFA $\mathcal A$ if there exists a finite run $q = (q_0,\ldots,q_n)\in Q^{n+1}$ such that $q_0$ is the initial location, $q_{i+1} = \trans(q_i,\omega_i)$ for all $0\leq i \le n$, and $q_n\in F$.

\smallskip

In the next section, we pave the way for answering Prob.~\ref{prob:prob2} on the class of scLTL specifications by an abstraction-based control design scheme. We provide a new similarity relation that enables parameter-independent control refinement from an abstract model to the original concrete model that belongs to a parameterized class.

\section{CONTROL REFINEMENT VIA SUB-SIMILARITY RELATIONS}
\label{sec:framework}
%\Sadegh{Could we state the next para in more general from, why only nominal, it could be obtained from discretization.}
Consider a set of models $\{\M(\theta)$ with $\theta\in\Theta\}$ and suppose that we have chosen an abstract (nominal) model $\Mh$ based on which we wish to design a \emph{single} controller and quantify the satisfaction probability over all models $\M(\theta)$ in the set of models. In this section, we start by formalizing the notion of 
%\emph{coupled state evolution} of the abstract system 
a \emph{state mapping}
and an \emph{interface function}, that together form the control refinement. Then, we investigate the conditions under which a single controller for $\Mh$ can be refined to a controller for all $\M(\theta)$ independent of the parameter $\theta$.
%\new{Note that this is the central contribution of this work, %since it allows us to not only
%	since instead of synthesizing one parametrized controller $\Ca(\theta)$ for each model $\M(\theta)$ by using the respective parameter $\theta$ as would be possible based on \cite{haesaert2020robust}, it allows us to synthesize a single controller $\Ca$ that works for all models $\M(\theta)$ in the set of models $\{\M(\theta)$ with $\theta\in\Theta\}$.}
% \red{(@Sofie: Can we say this like that?)}
%\red{[sofie :] changed it.}
This leads us to the novel concept of \emph{sub-probability couplings} and simulation relations.

%that enables we can use a controller for the nominal model $\hat\M$ and refine it to the larger set of models $\{\M(\theta)$ with $\theta\in\Theta\}$.
%In the next subsection, we will detail what we mean with a control refinement.

\subsection{Control refinement}
%Consider   $ \hat\Ca\times \hat\M  \satisfies \psi$ for all $\theta\in\Theta$ with probability at least $\hat p_\psi$ .
%How can we design a robust $\Ca$ based on $\Ca$ such that
%$\Ca\times \hat\M(\theta) \satisfies \psi$ for all $\theta\in \Theta$ with probability at least $p_\psi$.
Consider the MDP $\Mh=(\Xh,\xh{0},\Ah,\Trh)$ as the abstract model, the MDP $\M(\theta)=(\X,x_0,\A,\Tr(\theta))$ referred to as the concrete MDP, and an abstract controller $\Cah$ for $\Mh$.
To refine the controller $\Cah$ on $\Mh$ to a controller for $\M(\theta)$, we define a pair of interfacing functions consisting of a \emph{state mapping} that translates the states $ x\in\X$  to the states $\xh{} \in\Xh$  and an \emph{interface function} that refines the inputs $\uh{}\in\Ah$ to the control inputs $u\in\A$.

\noindent{\bfseries Interface function:}
%\Xh
We define an interface function \cite{girard2009hierarchical,girard2011approximate,lavaei2019compositional} as a Borel measurable stochastic kernel
$\InF: \Xh \times\X\times\Ah\rightarrow\borel{\A}$ such that
any input $\ach{k}$ for $\Mh$ is refined to an input $\ac{k}$ for $\M(\theta)$ as
\begin{equation}\label{eq:input_refi}
	\ac{k}\sim \InF(\cdot\,|\xh{k}, x_k, \uh{k}).
\end{equation}

\noindent{\bfseries State mapping:} We can define the state mapping in a general form as a stochastic kernel $\Refi$ that maps the current state $\xh{k}$ and input $\ach{k}$ to the next state $\xh{k+1}$ of the abstract model.
The next state $\xh{k+1}$ has the distribution specified by $\Refi$ as
\begin{equation*}
	\xh{k+1} \sim \Refi(\cdot |\xh{k}, \x{k}, \x{k+1},\ach{k}).
\end{equation*}
This state mapping is coupled with the concrete model via its states $\x{k}, \x{k+1}$ and depends implicitly on $\ac{k}$ through Eq.~\eqref{eq:input_refi}.

%\red{Input/control refinement, stick with one}
Fig.~\ref{fig:abstract_control} illustrates how the abstract controller $\Cah$ defines a control input $\uh{k}$ as a function of the abstract state $\xh{k}$. Based on the state mapping $\Refi$ and the interface function $\InF$, the abstract controller $\Cah$ can be refined to a controller for $\M(\theta)$ as depicted in Fig.~\ref{fig:contrl_ref}.
In this figure, the states $\x{k}$ from the concrete model $\M(\theta)$ are mapped to the abstract states $\xh{k}$ with $\Refi$, the control inputs $\ach{k}$ are obtained using $\Cah$ and $\xh{k}$, and these inputs are then refined to control inputs $\ac{k}$ for $\M(\theta)$ using the interface function $\InF$.

\usetikzlibrary{math}
\begin{figure}[htp]
	\centering
	\scalebox{0.9}{	\begin{tikzpicture}
\foreach \i in {5, 4, ..., 1}{
	\node[name=xa\i] at (\i/5*10,0) {$\hat x_\i$};
}
\foreach \i in {5, 4, ..., 1}{
	\node[name=aa\i] at (\i/5*10+.5,0) {$\ach{\i}$};
}

\foreach \i in {5, 4, ..., 1}{
	\path[draw,->] (xa\i) -- +(0,.5)-| node[left,above,xshift = -0.3cm]{\footnotesize$\widehat{\Ca}$} (aa\i);
}
\foreach \i in {4, 3, ..., 1}{
\tikzmath{int \j;\j=\i+1;}
	\path[draw,->] (aa\i)-- node[above ]{\footnotesize$\Trh $} (xa\j);
}

		\end{tikzpicture}
	}
		\caption{Controller $\Cah$ on the abstract model $\Mh$.}
		\label{fig:abstract_control}
\end{figure}

\usetikzlibrary{}
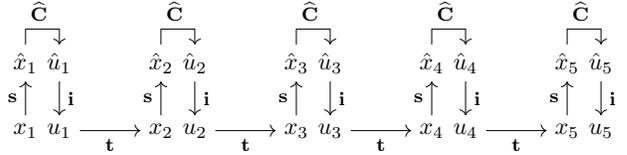
\begin{figure}[htp]
	\centering
	\scalebox{0.9}{	\begin{tikzpicture}
			\foreach \i in {5, 4, ..., 1}{
				\node[name=xa\i] at (\i/5*10,0) {$\hat x_\i$};
			}
			\foreach \i in {5, 4, ..., 1}{
				\node[name=aa\i] at (\i/5*10+.5,0) {$\ach{\i}$};
			}
			\foreach \i in {5, 4, ..., 1}{
				\node[name=x\i] at (\i/5*10,-1) {$ x_\i$};
			}
			\foreach \i in {5, 4, ..., 1}{
				\node[name=a\i] at (\i/5*10+.5,-1) {$\ac{\i}$};
			}

			\foreach \i in {5, 4, ..., 1}{
				\path[draw,->] (xa\i) -- +(0,.5)-| node[left,above,xshift = -0.3cm]{\footnotesize$\widehat{\Ca}$} (aa\i);
			}
			\foreach \i in {4, 3, ..., 1}{
				\tikzmath{int \j;\j=\i+1;}
				\path[draw,->] (a\i)-- node[below]{\footnotesize${\Tr}$
				} (x\j);
			}

			\foreach \i in {5, 4, ..., 1}{
				\path[draw,->] (x\i) -- node[left]{\footnotesize$\proj$
				} (xa\i);
			}
			\foreach \i in {5, 4, ..., 1}{
				\path[draw,->] (aa\i) -- node[right]{\footnotesize${\InF}$
				} (a\i);
			}
		\end{tikzpicture}
	}
	\caption{Control refinement that uses the abstract controller $\Cah$ to control the concrete model $\M(\theta)$. }\label{fig:contrl_ref}
\end{figure}

%For an uncertain model $\M$ and a given abstract model $\Mh$, we are interested in the conditions under which a controller $\Cah$ designed for $\Mh$ can be refined to a controller $\Ca$ for $\M$ and which stochastic properties can be preserved during this refinement.
%%
%\input{partial_RFF}
%%
%Consider a stochastic kernel $\Refi$ that maps current states and inputs to the state update of the abstraction model.
%That is, the abstract state $\xh{k+1}$ can be updated using
%\begin{equation}
%	\xh{k+1} \sim \Refi(\cdot |\xh{k}, \x{k}, \x{k+1},\ach{k}).
%\end{equation}
%\red{Refinement not consistent with figure.}\\
%%
%This stochastic kernel leads to a control refinement of any controller $\Cah$ designed for $\widehat\M$ to a controller  for $\M$ as depicted in Figure~\ref{fig:aux_model_3}.  Furthermore, any input $\ach{k}$ can be refined to $\ac{k}$ using $\InF$.

We now question under which conditions on the interface function and the models the refinement is valid and preserves the satisfaction probability. This is addressed in the next subsection.

\subsection{Valid control refinement and sub-simulation relations}\label{sec:partialStoch}

Before diving into the definition of a valid control refinement that is also amenable to models with parametric uncertainty, we introduce a relaxed version of approximate simulation relations   \cite{haesaert2017verification} based on sub-probability couplings.% with a minor relaxation on the coupling of probability distributions (Sec.~\ref{sec:partialStoch}).

We define a sub-probability coupling with respect to a given relation $\mathcal R\subset \Xh\times\X$ as follows.
%In \cite{haesaert2017verification}, we have defined the coupling of two probability distributions with respect to a given relation $\mathcal R\subset \X_1\times\X_2$. Similarly, we trivially consider sub-probability couplings that  allocate a probability mass to $\mathcal R$.

\begin{definition}[Sub-probability coupling]
\label{def:submeasure_lifting}
	Given $\po \in\mathcal P(\X)$, $\hat\po\in \mathcal P(\Xh)$,  $\mathcal R\subset\Xh\times\X$, and a value $\delta\in[0,1]$, we say that a sub-probability measure $\W$ over $(\Xh\times\X , \mathcal B(\Xh\times\X))$ with $\W(\Xh\times\X)\geq 1-\delta$ is a \emph{sub-probability coupling} of $\hat\po$ and $\po$ over $\rel$ if
	\begin{itemize}
%		\item[a)] the support of $\W$ is a subset of $\rel$, $\operatorname{supp}(\W)\subset\rel$, that is, its probability mass is located on $\rel$,
		\item[a)] $\W(\Xh\times\X)=\W(\R)$, that is, the probability mass of $\W$ is located on $\rel$;%\Oliver{Is "mass" or "density" correct? }
		\item[b)]   for all measurable sets $A\subset \Xh$, it holds that
		$\W(A\times \X)\leq\hat\po(A)$;
		and
		\item[c)]  for all measurable sets $A\subset \X$, it holds that $\W(\Xh\times A)\leq\po(A)$.
	\end{itemize} 
%\red{We say that $(\hat\po, \po)$ is in the lifted simulation relation $\bar \rel_\delta$, or equivalently $\hat\po\bar\rel_\delta\po$, if there exists a sub-probability coupling $\W$ satisfying the above conditions.}
%\pdfmargincomment[color=red]{We never use the notation introduced in this last sentence. Remove? I checked this again!!! I don't think we made any error. Instead, we are always saying that there is a "sub-probability coupling of phat and p over R", so basically the same but in prose. I'd suggest cutting it. Optionally, we can substitute it throughout the paper and the proofs.} %\hfill$\qed$ 
\end{definition}
Note that condition a) of Def.~\ref{def:submeasure_lifting} implies $\W(\rel)\geq1-\delta$.

%\hrule
%\begin{align*}
%	&p(A)\equiv\W_{full}(\Xh \times A)\\
%	&=\W(\Xh \times A) + \tfrac{1}{1-\W(\R)}\left((p(A)-\W(\Xh \times A))(\hat{p}(\Xh)-\W(\Xh \times \X))\right)\\
%	&\text{With $\hat{p}(\Xh)=1$ and $\W(\R)=\W(\Xh\times\X)$}\\
%	&= \W(\Xh \times A) + (p(A)-\W(\Xh \times A)) \text{ qed.}
%\end{align*}
%\hrule
\begin{remark}
	For the parametrized case, i.e., $\hat\po(\cdot)$ and $\po(\cdot|\theta)$, the sub-probability coupling $\W(\cdot|\theta)$ may likewise depend on $\theta$.
	Furthermore, though we define a sub-probability coupling $\W$ as a probability measure over the probability spaces $\X$ and $\Xh$, we use it in its kernel form $\Wt$ in the remainder of this paper. We obtain a particular probability measure $\W$ from $\Wt$ for a fixed choice of $\hat x,x,\ach{}$.
\end{remark}

%We say that a kernel $\Wt:\Y\times\borel{\Y}\rightarrow \mathbb{R}_{>0}$  is a sub-probability kernel if \todo{[complete this]}.  \color{black}
Let us now define $(\varepsilon, \delta)$-sub-simulation relations for stochastic systems, where $\varepsilon$ indicates the error in the output mapping and $\delta$ indicates the closeness in the probabilistic evolution of the two systems. %, which is a definition similar to $(\varepsilon, \delta)$-stochastic simulation relatio\Xhns. In contrast to the latter, our similarity definition will now use sub-probability couplings (c.f., Def.~\ref{def:submeasure_lifting}).
%\color{purple!60!blue}
%\Sadegh{Oliver, check the hat notation on all symbols to unify the usage of hat and widehat for each symbol. For example, we should only use one of the two cases of $\hat{\M}$ and $\Mh$, and one of the two cases of $\widehat{\X}$ and $\Xh$.}
\begin{definition}[($\varepsilon,\delta$)-sub-simulation relation (SSR)]
\label{def:subsim}
	Consider two gMDPs
	$\M\!=\!(\X,x_0,\A, \Tr, h, \Y)$ and $\Mh\!=\!(\Xh,\xh{0},\Ah, \Trh, \hat h,  \Y)$, a measurable relation $\rel\subset \Xh\times\X$, and an interface function \(\InF :\Xh\times\X\times\Ah \rightarrow \A\). If there exists a sub-probability kernel  $\Wt(\cdot|\xh{},x, \uh{})$
	%\begin{itemize}\itemsep=0pt
	%	\item[I.] an interface function \(\InF :\Xh\times\X\times\Ah \rightarrow \A\)
	%	\item[II.] a sub-probability kernel  $\Wt(\cdot|\xh{},x, \uh{})$
	% \end{itemize}
	such that
	\begin{itemize}\itemsep=0pt
		\item[(a)] $(\hat x_{0}, x_{0})\in \rel$;
		\item[(b)] for all $(\hat x,x)\in\rel$ and $\ach{}\in \Ah$, $\Wt(\cdot|\hat x,x,\ach{})$ is a sub-probability coupling  of $\hat \Tr(\cdot|\hat x,\ach{})$ and $\Tr(\cdot|x,\InF(\xh{},\x{},\ach{}))$ over $\rel$  with respect to $\delta $ (see Def.~\ref{def:submeasure_lifting});
		\item[(c)] $\forall (\hat x,x)\in\rel: \mathbf d_\Y(\hat h(\hat x),h(x)) \leq \varepsilon$;
	\end{itemize}
then $\Mh$ is in an $(\varepsilon,\delta)$-SSR with $\M$  that is denoted as \mbox{$\Mh\preceq^{\delta}_\eps\M$}.
\end{definition}
%\new{Equivalently to Def.~\ref{def:submeasure_lifting}}, the SSR
%\red{[Sofie]:  Some comments
%	Looking back at the definitions, Def. 6 is an alternative for the known stochastic simulation relation. That is if a stochastic simulation relation exists than there also exist this subsimulation relation, and the existence of a subsimulation relation implies the existence of a stochastic simulation relation.
%	However, Def.7 which is based on Def.6 is a relaxation of the control refinement based on the original stochastic simulation relation.
%Note that this also shows why Theorem 1 is quite trivial.
%}\Oliver{Last bit to be incorporated or note once more here could be nice. @Sadegh: Leaving this with you.}
\begin{remark}
Both Defs.~\ref{def:submeasure_lifting}-\ref{def:subsim} are technical relaxations of the definitions of $\delta$-lifting and of $(\varepsilon,\delta)$-stochastic simulation relations provided in \cite{haesaert2017verification}. These relations quantify the similarity between two models by bounding their transition probabilities with $\delta$ and their output distances with $\varepsilon$.
%	\pdfmargincomment{Mention that this new coupling is a partial coupling?}

%These new extended definitions are essential in the next section to achieve control refinement that holds for uncertain stochastic systems. Note that these definitions are fully equivalent to those in \cite{haesaert2017verification} for stochastic systems that are not subject to parametric uncertainty. \Sofie{Dont mention this yet bc we're not addressing this in these defs yet}
%\Sadegh{It is not clear whether we allow r and v to depend on the parameter or not. What about the state mapping? We should be clear about this and make sure we emphasize this aspect. It is not clear now.}
\end{remark}
For a model class $\M(\theta)$, where $\hat \Tr(\cdot|\hat x,\ach{})$ and $\Tr(\cdot|x,\InF(\xh{},\x{},\ach{});\theta)$, the above definition allows us to have an interface function $\InF$ that is independent of $\theta$ but a sub-probability kernel $\Wt(\cdot|\xh{},x, \uh{};\theta)$ which may depend on $\theta$.
In order to solve Prob.~\ref{prob:prob2}, we require the state mapping $\Refi(\cdot|\xh{k}, \x{k}, \x{k+1}, \ach{k})$
%\pdfmargincomment{What is the consensus about the two-sided notation with and without indices?} 
to also be independent of $\theta$. This leads us to a condition under which the state mapping $\Refi$ gives a valid control refinement as formalized next.
%\todo{Instead of calling this a valid control refinment can we call this an ($\varepsilon, \delta$)-accurate control refinement?}

%\red{(@Sofie: I omitted the explicit dependence on $\theta$ in most of the definitions, since they also hold when there is no dependence on $\theta$. Nevertheless, it might be more useful for understanding to add the dependence in the following definition. What do you think?)[Sofie:] I would leave the definition as is. However, after the def. you could also give the equation explicitly for the parameterized case. I think that would be more clear.}
\begin{definition}[\bfseries Valid control refinement]
\label{def:validrefinement}
	Consider an interface function $\InF: \Xh \times\X\times\Ah\rightarrow\borel{\A}$ and a sub-probability kernel  $\Wt$ according to Def.~\ref{def:subsim}
%	\pdfmargincomment[color=red]{Def. 5?}
	. We say that a state mapping $\Refi: \Xh\times\X\times\X\times\Ah\rightarrow\mathcal P(\Xh)$ % which gives realizations of the abstract model's state  $x_{1, k+1}\sim \Refi(\cdot |x_{1,k}, x_{2,k}, x_{2,k+1},a_{1,k})$.
	defines a \emph{valid control refinement} if the composed probability measure
	%\footnote{We have dropped the dependence on $\xh{k}, \x{k}, \ach{k}$ for brevity.} \Sadegh{Is this $\Ca$ or $\Cah$. The composition doesn't come across very well specially that we have eliminated the conditions and the controller. Perhaps we should refer to the figure above.}
	\begin{align*}
	\bar{\Refi}&(d\xh{k+1} \times d\x{k+1}|\xh{k}, \x{k}, \ach{k}):=\nonumber\\
	&\Refi(d\xh{k+1}| \xh{k}, \x{k}, x_{k+1}, \ach{k}) \Tr(d\x{k+1}|\x{k},\InF(\xh{k}, \x{k}, \ach{k}))
	%&\ac{k}\sim\InF(\xh{k}, \x{k}, \ach{k})
%	\label{eq:compprobmeas}
	\end{align*}
	 upper-bounds the sub-probability coupling $\Wt$, namely
	 \begin{equation}
	 \label{eq:sub_prob}
	\bar{\Refi}(A|\xh{k}, \x{k}, \ach{k})\geq \Wt(A|\xh{k}, \x{k}, \ach{k}),
	\end{equation}
	for all measurable sets $A\subset \Xh\times \X$, all $(\xh{k}, \x{k})\in\rel$, and all $\ach{k}\in\Ah$.
\end{definition}
Note that for a model class $\M(\theta)$ that is in relation with a model $\Mh$ the right-hand side of \eqref{eq:sub_prob} can depend on $\theta$ and the left-hand side can depend on $\theta$ only via the dynamics of $\M(\theta)$ represented by the kernel $\Tr$, i.e.,
\begin{align*}
	&\int_{(\xhp{},\xp{})\in A}\!\!\!\!\!\!\!\!\!\!\!\!\!\!\!\!\!\!\Refi(d\xhp{}| \xh{k}, \x{k}, x_{k+1}, \ach{k}) \Tr(d\xp{}|\x{k},\InF(\xh{k}, \x{k}, \ach{k});\theta)\\
	&\hspace{150pt}\geq\Wt(A|\xh{k}, \x{k}, \ach{k};\theta).
\end{align*}
The next theorem states that similar to the $(\varepsilon, \delta)$-approximate simulation relation defined in \cite{haesaert2017verification}, there always exists at least one valid control refinement for our newly defined SSR.
\begin{theorem}
\label{thm:exist_refine}
For two gMDPs $\Mh$ and $\M$ with $\Mh\preceq^{\delta}_\eps\M(\theta)$, there always exists a valid control refinement.
\end{theorem}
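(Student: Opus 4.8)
The plan is to construct a valid control refinement explicitly from the sub-probability kernel $\Wt$ guaranteed by the SSR. Since $\Mh \preceq^\delta_\eps \M(\theta)$, by Def.~\ref{def:subsim} there is a sub-probability kernel $\Wt(\cdot|\xh{k},\x{k},\ach{k})$ that, for each fixed $(\xh{k},\x{k})\in\rel$ and $\ach{k}\in\Ah$, couples $\hat\Tr(\cdot|\xh{k},\ach{k})$ and $\Tr(\cdot|\x{k},\InF(\xh{k},\x{k},\ach{k}))$ over $\rel$. The goal is to exhibit a state mapping $\Refi(\cdot|\xh{k},\x{k},\x{k+1},\ach{k})$ — a probability kernel into $\Xh$ — such that the composed measure $\bar{\Refi}(d\xh{k+1}\times d\x{k+1}\mid\cdot)$, which has the prescribed concrete marginal $\Tr(d\x{k+1}|\x{k},\InF(\cdot))$ by construction, dominates $\Wt$ in the sense of \eqref{eq:sub_prob}. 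The key point is that this construction must be carried out without reference to $\theta$ in $\Refi$ itself: $\theta$ enters only through $\Tr$.

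First I would disintegrate $\Wt(d\xh{k+1}\times d\x{k+1}\mid\cdot)$ with respect to its second ($\X$-)marginal. Because $\Xh$ and $\X$ are Polish, the measure $\Wt$ on $\Xh\times\X$ admits a disintegration $\Wt(d\xhp{}\times d\xp{}\mid\cdot) = \kappa(d\xhp{}\mid\xp{},\cdot)\,\nu(d\xp{}\mid\cdot)$, where $\nu$ is the $\X$-marginal of $\Wt$ and $\kappa$ is a probability kernel defined $\nu$-a.e. By condition (c) of Def.~\ref{def:submeasure_lifting}, $\nu(d\xp{}\mid\cdot) \le \Tr(d\xp{}\mid\x{k},\InF(\cdot))$, so the Radon–Nikodym derivative $g(\xp{}) := \frac{d\nu}{d\Tr}(\xp{})$ exists and lies in $[0,1]$. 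Then I would define the state mapping as a mixture: on the event where the "$\Wt$-mass is present" I set $\Refi(\cdot\mid\xh{k},\x{k},\xp{},\ach{k}) := g(\xp{})\,\kappa(\cdot\mid\xp{},\cdot) + (1-g(\xp{}))\,\lambda(\cdot)$, where $\lambda$ is any fixed probability measure on $\Xh$ (e.g.\ a Dirac at some reference point), chosen to make $\Refi$ a genuine probability kernel. A measurable-selection/regularity argument (standard for kernels between Polish spaces, invoking e.g.\ the results underlying universally measurable disintegrations) is needed to ensure $\Refi$ is jointly measurable in all its arguments.

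With this choice, for any measurable $A\subset\Xh\times\X$ I would compute $\bar{\Refi}(A\mid\cdot) = \int \Refi(A_{\xp{}}\mid\cdots,\xp{},\cdots)\,\Tr(d\xp{}\mid\cdot) \ge \int g(\xp{})\,\kappa(A_{\xp{}}\mid\xp{},\cdot)\,\Tr(d\xp{}\mid\cdot) = \int \kappa(A_{\xp{}}\mid\xp{},\cdot)\,\nu(d\xp{}\mid\cdot) = \Wt(A\mid\cdot)$, where $A_{\xp{}}$ is the $\xp{}$-section of $A$; the inequality drops the nonnegative $(1-g)\lambda$ term and the middle equality uses $g\,d\Tr = d\nu$. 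This establishes \eqref{eq:sub_prob}, so $\Refi$ is a valid control refinement per Def.~\ref{def:validrefinement}. For the parametrized family, the same $\Refi$ works simultaneously for all $\theta\in\Theta$: the disintegration and the derivative $g$ are taken with $\theta$ fixed inside $\Tr$ and $\Wt(\cdot\mid\cdot;\theta)$, but the functional form of $\Refi$ — the mixture recipe — is identical across $\theta$, and the verification of \eqref{eq:sub_prob} holds pointwise in $\theta$.

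The main obstacle I anticipate is the measurability bookkeeping: ensuring that the disintegration kernel $\kappa$, the Radon–Nikodym density $g$, and hence $\Refi$ depend measurably (indeed universally measurably, as the paper's policy framework requires) on the conditioning variables $(\xh{k},\x{k},\ach{k})$ and, in the parametric case, uniformly enough in $\theta$ that a single $\theta$-independent $\Refi$ is well-defined. On Polish spaces this is handled by standard but somewhat delicate disintegration theorems; I would cite the measure-theoretic references already in the paper (\cite{bogachev2007measure}) and the kernel constructions in \cite{haesaert2017verification}, where the analogous result for exact $\delta$-lifting is proved, and note that the sub-probability relaxation changes nothing essential since it only makes the total mass $\le 1$, which is absorbed cleanly by the defect term $(1-g)\lambda$.
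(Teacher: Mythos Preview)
The paper explicitly defers all proofs to an extended journal version, so there is no proof in the paper to compare against. Your disintegration-plus-Radon--Nikodym construction is correct and is the natural route: writing $\Wt = \kappa(d\xhp\mid\xp)\,\nu(d\xp)$ with $\nu\le\Tr$, setting $g=d\nu/d\Tr\in[0,1]$, and taking $\Refi=g\,\kappa+(1-g)\,\lambda$ yields $\bar{\Refi}\ge\Wt$ exactly as you compute. The measurability caveats you flag are real but standard on Polish spaces, and the reference to \cite{haesaert2017verification} for the analogous construction under exact lifting is appropriate.

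There is, however, a genuine gap in your final paragraph. You claim that ``the same $\Refi$ works simultaneously for all $\theta\in\Theta$'' because ``the functional form \ldots\ is identical across $\theta$.'' This does not follow: both $\kappa$ and $g$ are computed from $\Wt(\cdot\mid\cdot;\theta)$ and $\Tr(\cdot\mid\cdot;\theta)$, so the $\Refi$ you build depends on $\theta$ through them --- a shared recipe does not make the output $\theta$-free. Fortunately, the theorem as stated is only an existence result for a \emph{fixed} pair of gMDPs (the sentence immediately following it reads ``our new framework fully recovers the results of \cite{haesaert2017verification} for non-parametric models,'' and the subsequent remark notes that the multiplicity of valid refinements is what \emph{allows} a $\theta$-independent choice). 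The paper obtains $\theta$-independent refinements separately, by explicit construction for the linear and nonlinear classes in Sec.~\ref{sec:establish_relation}. So your core argument proves what the theorem actually claims; just drop the parametric uniformity claim, which the theorem does not assert and your construction does not deliver.
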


%\pdfmargincomment{I moved the following two sentences here:}
%The new extended Defs.~\ref{def:submeasure_lifting}-\ref{def:subsim} are essential to achieve control refinement that holds for uncertain stochastic systems. Note that the above theorem states that our new framework fully recovers the results of \cite{haesaert2017verification} for non-parametric models.

Note that in general there is more than one valid control refinement
%as formalized Def.~\ref{def:validrefinement}
for given $\Mh$ and $\M$ with $\Mh\preceq^{\delta}_\eps\M(\theta)$. This allows us to choose an interface function $\InF$ not dependent on $\theta$.
%\Sofie{[Here it would be nice to point out that there could be more than one control refinement.]} \Oliver{Is this not the case in  \cite{haesaert2017verification}?}\Oliver{Bc this allows us to choose the on e not dependen on theta}
%\Sadegh{DFA vs scLTL, which one to give?} \Oliver{done}
%
The above theorem states that our new framework fully recovers the results of \cite{haesaert2017verification} for non-parametric models.

Although Def.~\ref{def:validrefinement} gives a sufficient condition for a valid refinement, it is not a necessary condition. For specific control specifications, one can also use different refinement strategies such as the one presented in \cite{haesaert2018temporal}, where information on the value function and relation is used to refine the control policy. %\Sadegh{@Sofie: add a couple of sentences to hint on what kind of refinement is proposed there, otherwise not informative.}
%\pdfmargincomment{(@Sofie: Is this a valid refinement also for the case of a parametric stochastic system? If yes, we can keep it and somehow say that value functions are explained later (one of the reviewers noted that), otherwise, I think it doesn't add much value.)}

In the next theorem, we establish that our new similarity relation is transitive. This property is very useful when the abstract model is constructed in multiple stages of approximating the concrete model. We exploit this property in the case study section.

\begin{theorem}
\label{thm:transitive}
%Suppose $\Mh\preceq^{\delta_1}_{\eps_1}\M(\theta)$ and $\Mt\preceq^{\delta_2}_{\eps_2}\Mh$. Then, we have $\Mt\preceq^{\delta}_{\eps}\M(\theta)$ with $\delta = \delta_1+\delta_2$ and $\eps = \eps_1+\eps_2$.
Suppose $\Mt\preceq^{\delta_1}_{\eps_1}\Mh$ and $\Mh\preceq^{\delta_2}_{\eps_2}\M$. Then, we have $\Mt\preceq^{\delta}_{\eps}\M$ with $\delta = \delta_1+\delta_2$ and $\eps = \eps_1+\eps_2$.
%		Using the method described in this paper, we obtain an abstract model $\Mh$ which is indeed independent of the uncertain parameter $\theta$, i.e., we establish an SSR $\widehat{\M}\preceq^{\delta}_\eps\M(\theta)$, as outlined in the previous subsections. The obtained model $\Mh$ can subsequently be handled with existing gridding-based methods \cite{haesaert2020robust} to give an abstract model $\Mt$ with $\Mt\preceq^{\delta}_\eps\Mh$ for controller synthesis via value iteration. For systems subject to parametric uncertainty the method at hand therefore aids as an intermediate step in a two staged abstraction process. Note that using the transitivity property from Theorem 5 in \cite{haesaert2017verification} the respective errors $\varepsilon_{\M\rightarrow\Mh}$ and $\varepsilon_{\Mh\rightarrow\Mt}$ as well as the probabilities $\delta_{\M\rightarrow\Mh}$ and $\delta_{\Mh\rightarrow\Mt}$ sum up, respectively. The resulting overall error and probability are given by
%		\begin{align}
%			\begin{split}\label{eq:trans}
%				\varepsilon &= \varepsilon_{\M\rightarrow\Mh} + \varepsilon_{\Mh\rightarrow\Mt}\\
%				\delta &= \delta_{\M\rightarrow\Mh} + \delta_{\Mh\rightarrow\Mt}.
%			\end{split}
%		\end{align}
\end{theorem}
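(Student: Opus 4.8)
The plan is to explicitly construct the three ingredients required by Definition~\ref{def:subsim} for the pair $\Mt \preceq^{\delta}_{\eps} \M$: a relation $\rel \subset \widetilde{\X} \times \X$, an interface function $\InF : \widetilde{\X} \times \X \times \widetilde{\A} \to \A$, and a sub-probability kernel $\Wt$ coupling the transition kernels of $\Mt$ and $\M$. By hypothesis we are given relations $\rel_1 \subset \widetilde{\X} \times \Xh$ and $\rel_2 \subset \Xh \times \X$, interface functions $\InF_1, \InF_2$, and sub-probability kernels $\Wt_1, \Wt_2$ witnessing the two assumed relations. The natural choice is the composition $\rel := \rel_1 \circ \rel_2 = \{(\tilde x, x) : \exists \hat x \in \Xh,\ (\tilde x, \hat x) \in \rel_1 \text{ and } (\hat x, x) \in \rel_2\}$, and a composed interface $\InF(\tilde x, x, \tilde u) := \InF_2(\hat x, x, \InF_1(\tilde x, \hat x, \tilde u))$, where $\hat x$ is a witness point in the middle space. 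Since for each pair $(\tilde x, x) \in \rel$ there may be many such witnesses $\hat x$, I would fix a measurable selection $\hat x = \phi(\tilde x, x)$ (a measurable map picking one witness) so that all composed objects remain measurable; existence of such a selection on Polish spaces follows from measurable selection theorems, and this is the first technical point to handle carefully.

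The core of the argument is constructing the coupling $\Wt$. For fixed $(\tilde x, x) \in \rel$ with witness $\hat x = \phi(\tilde x, x)$ and input $\tilde u$, I have a sub-probability coupling $\Wt_1(\cdot \,|\, \tilde x, \hat x, \tilde u)$ of $\widetilde{\Tr}(\cdot|\tilde x, \tilde u)$ and $\widehat{\Tr}(\cdot|\hat x, \InF_1(\tilde x,\hat x,\tilde u))$ over $\rel_1$, and a sub-probability coupling $\Wt_2(\cdot \,|\, \hat x, x, \InF_1(\tilde x,\hat x,\tilde u))$ of $\widehat{\Tr}(\cdot|\hat x, \InF_1(\dots))$ and $\Tr(\cdot|x, \InF(\tilde x, x, \tilde u))$ over $\rel_2$. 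Both couplings have the \emph{same} middle marginal dominated by $\widehat{\Tr}(\cdot|\hat x, \InF_1(\dots))$. I would ``glue'' them along this common middle coordinate: disintegrate $\Wt_1$ with respect to its $\Xh$-marginal and $\Wt_2$ with respect to its $\Xh$-marginal, then form a measure on $\widetilde{\X} \times \Xh \times \X$ whose disintegration along $\hat x^+$ is the product of the two conditional laws, taken with respect to a common dominating measure on $\Xh$ (e.g. the pointwise minimum of the two $\Xh$-marginals, or simply integrating against one of them and using Radon–Nikodym factors). Projecting this three-way measure onto the $(\widetilde{\X}, \X)$ coordinates yields $\Wt(\cdot|\tilde x, x, \tilde u)$.

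It then remains to verify the three defining properties. Property (a), $(\tilde x_0, x_0) \in \rel$, follows since $(\tilde x_0, \hat x_0) \in \rel_1$ and $(\hat x_0, x_0) \in \rel_2$ by the hypotheses. Property (c) is immediate from the triangle inequality for $\mathbf d_\Y$: $\mathbf d_\Y(\tilde h(\tilde x), h(x)) \le \mathbf d_\Y(\tilde h(\tilde x), \hat h(\hat x)) + \mathbf d_\Y(\hat h(\hat x), h(x)) \le \eps_1 + \eps_2 = \eps$. For property (b), I need: the $\widetilde{\X}$-marginal of $\Wt$ is dominated by $\widetilde{\Tr}$ (inherited from $\Wt_1$ being dominated by $\widetilde{\Tr}$ through the gluing), the $\X$-marginal is dominated by $\Tr(\cdot|x,\InF(\dots))$ (inherited from $\Wt_2$), the total mass satisfies $\Wt(\widetilde{\X}\times\X) \ge 1 - \delta_1 - \delta_2$, and the mass lies on $\rel$. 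The mass bound is the crux: the glued three-way measure can lose at most $\delta_1$ mass from the $\Wt_1$ side and at most $\delta_2$ from the $\Wt_2$ side relative to the common middle marginal $\widehat{\Tr}$, and a union-bound-style estimate gives total deficiency at most $\delta_1 + \delta_2$; and since $\Wt_1$ is supported on $\rel_1$ and $\Wt_2$ on $\rel_2$, the glued measure is supported on $\{(\tilde x^+, \hat x^+, x^+) : (\tilde x^+, \hat x^+)\in\rel_1,\ (\hat x^+,x^+)\in\rel_2\}$, whose projection lies in $\rel$. I expect the main obstacle to be making the gluing construction fully rigorous when the two middle marginals $\Wt_1(\widetilde{\X}\times\cdot)$ and $\Wt_2(\cdot\times\X)$ do not coincide but are merely both dominated by $\widehat{\Tr}(\cdot|\hat x,\cdot)$ — one must carefully choose a common reference measure and track the mass lost on each side so that the bound $\delta_1+\delta_2$ comes out cleanly; this is where the analogue of the gluing lemma for couplings, adapted to sub-probability measures, does the real work.
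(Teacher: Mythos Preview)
The paper does not actually include a proof of this theorem; it states explicitly that ``due to space limitations, the proofs of statements will be provided in an extended journal version.'' So there is no proof in the present source to compare your construction against.

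That said, your proposal is the standard and correct route: compose the two relations, compose the interface functions through a measurably selected intermediate point $\hat x=\phi(\tilde x,x)$, and glue the two sub-probability couplings along their shared $\Xh$-coordinate. This is precisely the strategy used for the transitivity of the $(\eps,\delta)$-approximate simulation relation in \cite{haesaert2017verification}, of which Defs.~\ref{def:submeasure_lifting}--\ref{def:subsim} here are a relaxation, so it is almost certainly what the authors have in mind. You have also correctly isolated the only genuinely nontrivial step: the two $\Xh$-marginals of $\Wt_1$ and $\Wt_2$ are each merely \emph{dominated} by $\hat\Tr(\cdot\,|\,\hat x,\InF_1(\tilde x,\hat x,\tilde u))$, not equal to it, so the glued three-way measure must be built over their pointwise minimum. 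The total-mass bound $1-\delta_1-\delta_2$ then drops out of the elementary inequality $\min(f_1,f_2)\ge f_1+f_2-1$ for Radon--Nikodym densities $f_1,f_2\le 1$ integrated against $\hat\Tr$. The remaining checks---marginal domination, support in $\rel_1\circ\rel_2$, the initial-state condition, and the triangle inequality for $\eps$---are routine, and measurable selection on Polish spaces handles the choice of $\phi$. There is no gap in your outline.
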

%\Oliver{The problem with this order of $\Mh, \Mt, \M$ is that the SSR was defined in the other direction (c.f. Def.~\ref{def:subsim}).Resolved}

\subsection{Temporal logic control with sub-simulation relations}
For designing controllers to satisfy temporal logic properties expressed as scLTL specifications, we employ the representation of the specification as a DFA $\mathcal A = \left(Q,q_0,\Sigma,F,\trans\right)$ (see Def.~\ref{def:dfa}).
%, where $Q$ is the finite set of locations of the DFA, $q_0\in Q$ is the initial location, $\Sigma$ the finite alphabet, $F\subset Q$ is the set of accepting locations, and $\trans: Q\times\Sigma\rightarrow Q$ is the transition function. \Oliver{Repetitive, since now defined in Def.~\ref{def:dfa}: ..., we employ the representation of the specification as a DFA $\mathcal A = \left(Q,q_0,\Sigma,F,\trans\right)$ (c.f. Def.~\ref{def:dfa}).}
We then use a robust version of the dynamic programming characterization of the satisfaction probability \cite{haesaert2020robust} defined on the product of $\Mh$ and $\mathcal A$. We now provide the details of this characterization that encode the effects of both $\varepsilon$ and $\delta$.

Denote the $\varepsilon$-neighborhood of an element $y\in\Y$ as
\[B_\varepsilon(\hat y) :=\{y \in \Y|\,  \mathbf{d}_\Y\left(y,\hat y\right)\leq \varepsilon\}, \]
and a Markov policy $\mu:\Xh\times Q\rightarrow \mathcal P(\Ah,\mathcal B(\Ah))$.
Similar to dynamic programming with perfect model knowledge \cite{Sutton2018RL}, we utilize value functions $\Vb_l^\mu:\Xh\times Q\rightarrow [0,1]$, $l\in\{0,1,2,\ldots\}$ that represent the probability of starting in $(x_0, q_0)$ and reaching the accepting set $F$ in $l$ steps. These value functions are connected recursively via operators associated with the dynamics of the systems.
%However, instead of computing $V_N^\mu$ explicitly, we calculate a value function $\Vb_N^\mu:\Xh\times Q\rightarrow [0,1]$ and retrieve the true probability in a consecutive step.
%\Oliver{(@Sofie: I think this is correct but it isn't nice.)}\red{[Sofie]: This paragraph is not clear to me. What does "resembling" mean here? I wouldnt know what the difference would be between the different value functions.  }\Oliver{Do you have an idea of how to phrase in a better way?}\Oliver{Switched bar}
%\pdfmargincomment{Is this understandable now?}

Let the initial value function $\Vb_0\equiv 0$. Define the \emph{$(\eps,\delta)$-robust operator} \( \mathbf T_{\eps,\delta}^{\mu}\), acting on value functions as
\begin{align*}
	\notag \textstyle \mathbf T_{\eps,\delta}^{\mu} (\Vb)(\hat x,q)\!:=\!\Lim&\Big(\!\!\int_{\hat\X} \min_{q'\in \bar\trans(q,\hat x')}\!\!\left\{\max\left\{1_{F}(q'), \Vb(\hat x',q')\right\}\right\}\\
	&\hspace{1cm}\times \Trh(d\hat x'|\hat x,\mu)-\delta\Big),
\end{align*}
with $\bar\trans(q,\hat x'):=\{\trans(q,\alpha)\mbox{ with }\alpha \in\mathcal L(B_\varepsilon(\hat h (\hat x')))\}$. The function $\Lim:\mathbb R\rightarrow [0,1]$ is the truncation function with $\Lim(\cdot) := \min(1,\max(0,\cdot))$, and $\ind_F$ is the indicator function of the set $F$.
%Note that $\delta$ accounts for the parametric uncertainty.\Sofie{[Prev. sentence is unclear in this context. Please remove.]}
The value functions are connected via $\mathbf T_{\eps,\delta}^{\mu}$ as
\begin{equation*}
	\Vb_{l+1}^\mu = \mathbf T_{\eps,\delta}^{\mu}(\Vb_l^\mu),\quad l\in\{0,1,2,\ldots\}.
\end{equation*}
Furthermore, we define the \emph{optimal $(\eps,\delta)$-robust operator}
% $\mathbf T_{\eps,\delta}^{\ast}$ as
\begin{equation*}
%	\label{eq:optimalrobustoperator}
	\mathbf T_{\eps,\delta}^{\ast} (\Vb)(\hat x,q):=\sup_{\mu} \mathbf T_{\eps,\delta}^{\mu} (\Vb)(\hat x,q).
\end{equation*}
The outer supremum is taken over Markov policies $\mu$ on the product space $\Xh\times Q$.
% \Oliver{(@Sofie: Isn't the Markov policy defined over $\Xh$ only?)}\red{[Sofie]: This is a Markov policy for the composed stochastic system with state $\Xh\times Q$.}. \Oliver{Then the local definition $\mu:\Xh\rightarrow \mathcal P(\Ah,\mathcal B(\Ah))$ is wrong, right?}
%\pdfmargincomment{Do we need to note anything bc the Markov policy was originally introduced to map from the state space and not the product space (see Sec.~\ref{sec:MDPs})?}
Based on \cite{haesaert2020robust}, Cor.~4, we can now define the lower bound on the satisfaction probability by looking at the limit case $l\rightarrow\infty$, as given in the following proposition.%

\begin{proposition}
\label{prop:delepsreach_infty}
Suppose $\widehat\M\preceq^{\delta}_\varepsilon\M$ with $\delta>0$, and the specification being expressed as a DFA  $\mathcal{A}$.
Then, for any $\M$ we can construct $\Ca$ such that the specification is satisfied by $\Ca\times\M$ with probability at least $\mathcal S_{\varepsilon,\delta}^\ast$. This quantity is the \emph{($\varepsilon,\delta$)-robust satisfaction probability} defined as
\begin{equation*}
%\label{eq:opt_sat_rob_2v}
\mathcal S_{\varepsilon,\delta}^\ast
:=\min_{\bar q_0\in \bar\trans(q_0,\hat x_0)}\!\!\max\left(\ind_F(\bar q_0), \Vb_\infty^{\ast}(\hat x_0,\bar q_0)\right),
\end{equation*}
where $\Vb_\infty^{\ast}$ is the unique solution of the fixed-point equation
\begin{equation}
\label{eq:fixed_point}
\Vb_{\infty}^\ast = \mathbf T^\ast_{\eps,\delta} (\Vb_{\infty}^\ast),
\end{equation}
obtained from $\Vb_\infty^\ast :=\lim_{l\rightarrow\infty} (\mathbf T^\ast_{\varepsilon,\delta})^{l}(\Vb_0)$ with $\Vb_0 =0$.
The abstract controller $\Cah$ is the stationary Markov policy $\mu^\ast$
% \new{$\boldsymbol{\mu}^\ast=(\mu^\ast,\mu^\ast,\mu^\ast,\ldots)$}
% \red{$\mu^\ast$} \Oliver{Shouldn't this be bold?}
 that maximizes the right-hand side of \eqref{eq:fixed_point}, i.e., $\mu^\ast=\arg \sup_{\mu} \mathbf T_{\eps,\delta}^{\mu} (\Vb_\infty^\ast)$.
 %\pdfmargincomment[color=red]{Is the formula for getting mustar correct?}
% 	\red{[Sofie: this is not very precise]} \Oliver{Refer to the optimal operator equation instead?}
The controller $\Ca$ is the refined controller obtained from the abstract controller $\Cah$, the interface function $\InF$, and the state mapping $\Refi$. 
%\red{(@Sofie: As discussed, this proposition only holds like this for $\delta\neq 0$ and infinite-horizons. Otherwise, we don't necessarily retrieve a unique solution nor a stationary policy. Add sth to address this.)}\Sofie{[the main issue is indeed the $\delta>0$. Stationarity is not an issue, even if the specification has a bounded time horizon, by computing it over an infinite horizon the non stationary  part of the policy will be encoded via the finite states of the DFA.  ]}\Oliver{Sadegh and I will take a look at this.}\Oliver{For journal version}
\end{proposition}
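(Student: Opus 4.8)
The plan is to convert the scLTL control problem into a reachability problem on the product of $\Mh$ with the DFA $\mathcal{A}$, transfer the reachability guarantee down to $\M$ through a single refined controller, and then pass to the infinite horizon. First I would invoke Theorem~\ref{thm:exist_refine}: since $\Mh\preceq^{\delta}_\eps\M$, there exists a valid control refinement, i.e., an interface function $\InF$ and a state mapping $\Refi$ --- both taken independent of $\theta$ in the parametric case --- whose composition $\bar\Refi$ upper-bounds the sub-probability kernel $\Wt$ of Def.~\ref{def:subsim}. Let $\Ca$ be the controller refining the abstract stationary policy $\mu^\ast$ through $\InF$ and $\Refi$. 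Then $\Ca\times\M$ is a Markov chain whose state $(\xh{},x,q)$ collects the controller memory $\xh{},q$ and the plant state $x$: here $q$ is updated along the \emph{true} output by $q^+=\trans\big(q,\mathcal{L}(h(x^+))\big)$, and, given $(\xh{},x,q)$, the law of the next pair $(\xhp,\xp)$ equals $\bar\Refi\big(\cdot\,\big|\,\xh{},x,\mu^\ast(\xh{},q)\big)$.

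The heart of the argument is a one-step inequality that replaces the $\delta$-lifting step of \cite{haesaert2017verification,haesaert2020robust} by the sub-probability coupling. Let $p_l(\xh{},x,q)$ denote the probability that $\Ca\times\M$, started from $(\xh{},x,q)$, drives $q$ into $F$ within the first $l$ steps (with $F$ absorbing and the initial check handled separately). I would prove by induction on $l$ that $p_l(\xh{},x,q)\ge\big(\mathbf T^{\mu^\ast}_{\eps,\delta}\big)^{l}(\Vb_0)(\xh{},q)$ for all $(\xh{},x)\in\rel$ and all $q$, with $\Vb_0\equiv0$; the base case is trivial. For the step, fix $(\xh{},x)\in\rel$ and $q$. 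Condition~(c) of Def.~\ref{def:subsim} gives $\mathbf d_\Y\big(\hat h(\xh{}),h(x)\big)\le\eps$, so whenever $(\xhp,\xp)\in\rel$ the genuine DFA successor $\trans\big(q,\mathcal{L}(h(\xp))\big)$ lies in $\bar\trans(q,\xhp)$; together with the induction hypothesis, the integrand $\phi(\xhp,\xp):=\max\big\{\ind_F(q^+),\,p_l(\xhp,\xp,q^+)\big\}$ governing $p_{l+1}$ satisfies, on $\rel$, $\phi(\xhp,\xp)\ge g(\xhp):=\min_{q'\in\bar\trans(q,\xhp)}\max\big\{\ind_F(q'),(\mathbf T^{\mu^\ast}_{\eps,\delta})^{l}(\Vb_0)(\xhp,q')\big\}\in[0,1]$. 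Since $\phi\ge0$ and $\bar\Refi\ge\Wt$ as measures, and since $\Wt$ concentrates on $\rel$, has first marginal dominated by $\Trh\big(\cdot\,\big|\,\xh{},\mu^\ast(\xh{},q)\big)$, and has total mass at least $1-\delta$ while $g\le1$, integration yields
\[
p_{l+1}(\xh{},x,q)\;\ge\;\int_{\Xh}g(\xhp)\,\Trh\big(d\xhp\,\big|\,\xh{},\mu^\ast(\xh{},q)\big)\;-\;\delta,
\]
and, since $p_{l+1}\in[0,1]$ makes the truncation $\Lim$ inert, the right-hand side equals $\big(\mathbf T^{\mu^\ast}_{\eps,\delta}\big)^{l+1}(\Vb_0)(\xh{},q)$, closing the induction.

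To conclude, I would invoke Cor.~4 of \cite{haesaert2020robust}: the iterates $\big(\mathbf T^{\mu^\ast}_{\eps,\delta}\big)^{l}(\Vb_0)$ increase monotonically to $\Vb_\infty^\ast$, the solution of the fixed-point equation~\eqref{eq:fixed_point}, with $\mu^\ast$ the associated optimal stationary Markov policy. Letting $l\to\infty$ (monotone convergence of $p_l$ to $p_\infty$) gives $p_\infty(\xh{0},x_0,q)\ge\Vb_\infty^\ast(\xh{0},q)$ for every $(\xh{0},x_0)\in\rel$ and every $q$. Since $(\xh{0},x_0)\in\rel$ by condition~(a), the initial output satisfies $\mathcal{L}(h(x_0))\in\mathcal{L}\big(B_\eps(\hat h(\xh{0}))\big)$, hence $\bar q_0:=\trans\big(q_0,\mathcal{L}(h(x_0))\big)\in\bar\trans(q_0,\xh{0})$, so
\[
\P\big(\Ca\times\M\satisfies\psi\big)=\max\big(\ind_F(\bar q_0),p_\infty(\xh{0},x_0,\bar q_0)\big)\ge\min_{q'\in\bar\trans(q_0,\xh{0})}\max\big(\ind_F(q'),\Vb_\infty^\ast(\xh{0},q')\big)=\mathcal{S}^\ast_{\eps,\delta}.
\]
In the parametric setting the same $\Ca$ works for every $\theta\in\Theta$, because $\InF$ and $\Refi$ do not depend on $\theta$, while $\Wt(\cdot\,|\,\xh{},x,\ach{};\theta)$ remains a sub-probability coupling for each $\theta$ and $\mathcal{S}^\ast_{\eps,\delta}$ is computed entirely on the abstraction.

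The step I expect to be the main obstacle is this one-step inequality: unlike a $\delta$-lifting, the sub-probability coupling has total mass possibly strictly below one and marginals that are only dominated by --- rather than equal to --- the transition kernels, so I must verify that the per-step loss is at most $\delta$ and no larger. This is precisely where the nonnegativity of the integrand, the bound $g\le1$, and the mass condition $\Wt(\Xh\times\X)\ge1-\delta$ enter; a secondary subtlety is the measurability of $g$ (inherited from that of $\hat h$ and $\bar\trans$, as in the definition of $\mathbf T^\mu_{\eps,\delta}$) and the fact, part of the cited corollary, that the stationary policy $\mu^\ast$ optimal for the fixed point attains $\Vb_\infty^\ast$ when iterated from $\Vb_0$.
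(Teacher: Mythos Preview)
The paper does not actually give a proof of this proposition: it introduces the robust operator, cites Cor.~4 of \cite{haesaert2020robust} as the basis, and defers all proofs to ``an extended journal version.'' So there is nothing to compare against line by line; your proposal is in effect supplying what the paper omits.

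Your argument is the natural adaptation of the \cite{haesaert2017verification,haesaert2020robust} machinery to the sub-probability setting, and the one-step inequality you flag as the crux is handled correctly. The point that needs to be made explicit (you state the ingredients but do not quite write the line) is this: if $\nu$ denotes the $\Xh$-marginal of $\Wt$, then $\nu\le\Trh$ and $\Trh(\Xh)-\nu(\Xh)\le\delta$, so $\Trh-\nu$ is a nonnegative measure of total mass at most $\delta$; since $0\le g\le 1$ one gets $\int g\,d\nu=\int g\,d\Trh-\int g\,d(\Trh-\nu)\ge\int g\,d\Trh-\delta$. This is exactly why domination of the marginal (rather than equality, as in a full $\delta$-lifting) still costs only $\delta$ per step. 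Everything else---the product construction, the $\eps$-robust DFA update via $\bar\trans$, monotone convergence, and the treatment of the initial state---follows the template of \cite{haesaert2020robust}. The secondary subtlety you mention, that iterating $\mathbf T^{\mu^\ast}_{\eps,\delta}$ from $0$ reaches $\Vb_\infty^\ast$, is indeed what the $\delta>0$ hypothesis buys: it makes the fixed point of $\mathbf T^{\mu^\ast}_{\eps,\delta}$ unique, and since $\mu^\ast$ is optimal at $\Vb_\infty^\ast$ that common fixed point is $\Vb_\infty^\ast$.
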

%\Sofie{[I need to take another look at this Proposition later.]}
%The proof follows similar steps as in \cite{haesaert2020robust}.

%\begin{remark}
%	The new definition of control refinement is interesting as it replaces the standard conditional build of the refined control to a set of possible refinements. It is within this set that we can also search for control refinements that are \emph{output-based}.
%	The output-based notion is importance in we want ot build on top of inaccurate belief MDP implementations.
%\end{remark}

%\Sadegh{eliminate this para or move it somewhere before.} \Oliver{Eliminate?}
%There are two situations that we can directly distinguish where this definition of control refinement carries benefits. The first, is in the case that computing a full, precise coupling and its conditional is cumbersome or computationally infeasible. The second is the case where information about the true system is missing and where we can hence not compute the true full coupling. We will now take a look at the second more specific situation.

%\section{Parametrized simulation relations and control refinement}

%\red{unify: partial vs sub-probability, uncertain vs parameterized, etc unify the names...}

\section{Simulation Relations for Linear and Nonlinear Systems}
\label{sec:establish_relation}

In this section, we apply the previously defined concepts
to construct simulation relations and show how to answer Prob.~\ref{prob:prob2} first on a simple linear system, and then on general nonlinear systems.

%In this section, we apply the previously defined concepts to \new{realize}
%\pdfmargincomment[color=red]{Or realize?} 
%simulation relations \new{for two types of systems. To illustrate the ideas, we start by looking at the special case of a simple linear system.  Afterwards, we extend the results to general nonlinear systems, thereby clarifying the involved steps and disclosing their differences.}

\begin{comment}
Given an uncertain concrete system $\M(\theta)$ and a nominal model $\Mh$, suppose that for all $\theta\in\Theta$ there exists a sub-probability kernel $\Wt$ such that the two systems are in an SSR, i.e., $\Mh\preceq^{\delta}_\eps\M$. We will derive the specifics of a possible choice for $\Wt$ based on the following query:

\begin{prob}\label{prob:condRefi}
	For an arbitrary controller $\Cah$ for $\Mh$, under which circumstances can we find a valid control refinement $\Ca$, composed of $\InF$ and $\Refi$ as defined in Def.~\ref{def:validrefinement}, that is independent of $\theta$?
\end{prob}
Note that this is a necessary property of $\Ca$, since $\theta$ is unknown and will be captured as parametric uncertainty by the SSR. To answer the query, we first look at an exemplary affine system, that is simple to analyze, utilizing the previously defined concepts.
\end{comment}

\subsection{Simulation relations for linear systems}\label{sec:simrel_affine}
%\Oliver{Rearranged from here $>>>$\hrule}

%%%%%
Consider the following uncertain linear system
\begin{equation}\label{eq:lti_uncertain}
	\M(\theta): \left\{ \begin{array}{ll}
		x_{k+1} &= Ax_{k}+ B u_k+ w_k+\theta\\
		y_k &= Cx_k,
	\end{array} \right.
\end{equation}
with known matrices $A,B,C$, an uncertain parameter $\theta\in\Theta$, and the nominal model
\begin{equation}\label{eq:lti_nom}
	\Mh: \left\{ \begin{array}{ll}
		\hat x_{k+1} &= A \hat x_{k}+ B \hat u_k+\hat w_k\\
		\hat y_k &= C \hat x_k.
	\end{array} \right.
\end{equation}
The noises $w_k,\hat w_k$ have Gaussian distribution with a full rank covariance matrix. Without loss of generality, we assume that $w_k$ and $\hat w_k$ have the distribution $\mathcal N(0,I)$, with $I$ being the identity matrix.
The probability of transitioning from a state $x_k$ with input $u_k$ to a state $x_{k+1}\in S$ is given by $P(x_{k+1}\in S|x_k,u_k) = \int_{S} \Tr(dx_{k+1}|x_k,u_k)$, where
\begin{align*}
%	\begin{split}
%		\label{eq:tr}
		\Tr(\dxp|x,u;\theta) &= \mathcal N(\dxp|Ax+Bu+\theta, I)\\
		&	= \int_w \fdel{\dxp|w;\theta} \mathcal N(dw|0, I).
%	\end{split}
\end{align*}
In this equation,  $\mathcal N$ and $f_\delta$ are the normal and Dirac stochastic kernels defined by
\begin{align*}
&\mathcal N(\dxp|m, \Sigma) \!:=\!\frac{\dxp}{ \sqrt{(2\pi)^n \!\left|\Sigma\right| }}\exp\!\left[\!-\frac{1}{2}(x^+\!\!\!-\!m)\T\!\Sigma^{-1}\!(x^+\!\!\!-\!m)\!\right]\nonumber\\
& \fdel{\dxp|w;\theta}:= \delta_{Ax+Bu+\theta+w}(\dxp),
%\label{eq:fdel_lin}
\end{align*}
with $n := \mathrm{dim}\left(\Sigma\right)$ and $|\Sigma|$ the determinant of $\Sigma$.

Similarly, the stochastic kernel of $\Mh$ is
\begin{align*}
%	\begin{split}
%		\label{eq:tr_hat}
		\hat\Tr(\dxhp|\hat x,\hat u)& = \mathcal N(\dxhp|A\hat x+B\hat u, I)\\
		&	= \int_{\hat w} \fhdel{\dxhp|\hat w} \mathcal N(d\hat w|0, I),
%	\end{split}
\end{align*}
with $\fhdel{\dxhp|\hat w}:=\delta_{A\hat x+B\hat u+\hat w}(\dxhp)$.
%	\label{eq:fhdel_lin}
Note that $\fdel{\cdot}$ and $\fhdel{\cdot}$ are dependent on $x,u$, and $\hat x,\hat u$, respectively, omitted here for conciseness.

We choose the interface function $\ac{k}\sim\InF(\cdot|\hat x_k, x_k, \hat u_k):=\ach{k}$,
and the noise coupling $\hat w_k\equiv w_k + \theta$ to get the state mapping
\[ \xhp = \xp-A (x_k-\hat x_k)-B (\ac{k}-\ach{k}), \]
not dependent on $\theta$, i.e.,
%\pdfmargincomment{Why not $d\xhp$ on left hand side in following equation?}
\begin{equation}
\label{eq:statemap}
	\xhp \sim \Refi(\cdot |\xh{k}, \x{k}, \x{k+1},\ach{k}) := \delta_{\xp-A x^\Delta_k-B u^\Delta_k}(d\xhp),
\end{equation}
where $x^\Delta_k:=x_k-\hat x_k$ and $u^\Delta_k:=\ac{k}-\ach{k}$.
To establish an SSR $\widehat{\M}\preceq^{\delta}_\eps\M(\theta)$, we select the relation
\begin{equation}
\label{eq:relation_linear}
\R:=\{ (\hat x,x)\in\Xh\times\X \given x=\hat x\}.
\end{equation}
Condition (a) of Def.~\ref{def:subsim} holds by setting the initial states $\hat x_{0} = x_{0}$. Condition (c) is satisfied with $\eps=0$ since both systems use the same output mapping. For condition (b), we define the sub-probability coupling $\Wt(\cdot|\theta)$ of $\hat \Tr(\cdot|\hat x,\ach{})$ and $\Tr(\cdot|x,\InF;\theta)$ over $\R$:
\begin{align}
%	\begin{split}
		&\Wt(d\xhp\times d\xp|\theta) = \int_{\hat w}\int_w \fhdel{\dxhp|\hat w} \fdel{\dxp|w;\theta}\nonumber \\
		&\hspace{50pt}\times\delta_{\theta+w}(d\hat w) \min\{\mathcal N(dw|0,I),\mathcal N(dw|-\theta, I)\},%\nonumber
		\label{eq:subprobcoup}
%	\end{split}
\end{align}
that takes the minimum of two probability measures.
%
%Recall that we want to establish an SSR $\widehat{\M}\preceq^{\delta}_\eps\M$ as defined in Def.~\ref{def:subsim}. The most desirable relation to establish would be to have $x\equiv\hat x$ with some probability $p$. Since from Eq.~\eqref{eq:statemap_} this would yield a state mapping $\xhp = \xp$, choosing a relation $\R:=\{ (\hat x,x)\in\Xh\times\X \given x=\hat x  \}$ means that for $(\hat x_{0}, x_{0})\in \rel$ the state mapping forces the coupled system to stay in $\R$, hence meeting conditions a) and c) of Def.~\ref{def:subsim}.\footnote{Note that both systems use the same output mapping and we get $\varepsilon\geq0$.}
%To establish an SSR, according to Def.~\ref{def:subsim}b)
%
%

\begin{theorem}
\label{thm:linear}
For the linear models in \eqref{eq:lti_uncertain}-\eqref{eq:lti_nom}, we have that $\widehat{\M}\preceq^{\delta}_\eps\M(\theta)$ with interface function $\ac{k} = \ach{k}$, relation \eqref{eq:relation_linear}, and the sub-probability kernel \eqref{eq:subprobcoup}. The state mapping \eqref{eq:statemap} defines a valid control refinement with $\eps=0$ and
\begin{equation}
\label{eq:global_delta}
	\delta = \sup_{\theta\in\Theta}\left\lbrace  1 - 2\cdot\cdf{-\frac{1}{2}\norm{\theta}}\right\rbrace,
\end{equation}
where $\cdf{\cdot}$ is the cumulative distribution function of a Gaussian distribution, $\cdf{\alpha}:= \int_{-\infty}^\alpha \frac{1}{\sqrt{2\pi}}\exp(-\beta^2/2)d\beta$.
%\begin{proof}
%	See Appendix~\ref{app:proof_validspc}.
%\end{proof}
\end{theorem}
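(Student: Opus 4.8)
The plan is to check, for the proposed data, the three conditions of Def.~\ref{def:subsim}, then the upper-bound condition of Def.~\ref{def:validrefinement} for the state mapping~\eqref{eq:statemap}, and finally to evaluate the deficiency $\delta$ in closed form. Conditions (a) and (c) of Def.~\ref{def:subsim} are immediate: (a) holds by the choice $\hat x_0=x_0$, and (c) holds with $\eps=0$ because on $\R$ we have $x=\hat x$ and both systems apply the same output map $C$, so $\mathbf d_\Y(C\hat x,Cx)=0$. The substance is condition (b): for every $(\hat x,x)\in\R$ and $\ach{k}\in\Ah$ one must show that $\Wt(\cdot\,|\,\hat x,x,\ach{k};\theta)$ from~\eqref{eq:subprobcoup} is a sub-probability coupling of $\hat\Tr(\cdot\,|\,\hat x,\ach{k})$ and $\Tr(\cdot\,|\,x,\InF(\hat x,x,\ach{k});\theta)$ over $\R$ in the sense of Def.~\ref{def:submeasure_lifting}.

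First I would specialize~\eqref{eq:subprobcoup} to $\R$: with the interface $\ac{k}=\ach{k}$ we get $x^\Delta_k=0$ and $u^\Delta_k=0$, and the factor $\delta_{\theta+w}(d\hat w)$ forces the noise coupling $\hat w=\theta+w$; substituting this into the Dirac kernels shows that $\fhdel{\dxhp|\hat w}$ and $\fdel{\dxp|w;\theta}$ place their mass at the same point $A\hat x+B\ach{k}+\theta+w=Ax+B\ac{k}+\theta+w$, so $\hat x^+=x^+$ and the whole mass of $\Wt$ lies on the diagonal $\R$, which is condition (a) of Def.~\ref{def:submeasure_lifting}. For conditions (b)--(c) of Def.~\ref{def:submeasure_lifting} I would note that the two marginals of $\Wt$ are the pushforwards of the sub-probability measure $\min\{\mathcal N(dw|0,I),\mathcal N(dw|-\theta,I)\}$ under the affine maps $w\mapsto A\hat x+B\ach{k}+\theta+w$ and $w\mapsto Ax+B\ac{k}+\theta+w$, respectively. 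Since $\min\{\mathcal N(dw|0,I),\mathcal N(dw|-\theta,I)\}\le\mathcal N(dw|-\theta,I)$ pointwise and the pushforward of $\mathcal N(dw|-\theta,I)$ under the first map is $\mathcal N(\cdot\,|\,A\hat x+B\ach{k},I)=\hat\Tr(\cdot\,|\,\hat x,\ach{k})$, the first marginal is dominated by $\hat\Tr$; likewise $\min\{\mathcal N(dw|0,I),\mathcal N(dw|-\theta,I)\}\le\mathcal N(dw|0,I)$ and the pushforward of $\mathcal N(dw|0,I)$ under the second map is $\mathcal N(\cdot\,|\,Ax+B\ac{k}+\theta,I)=\Tr(\cdot\,|\,x,\InF;\theta)$, so the second marginal is dominated by $\Tr$. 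Monotonicity of pushforwards makes both bounds routine, and measurability of $\Wt$ in $(\hat x,x,\ach{k})$ is standard.

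The explicit $\delta$ then comes from the total mass $\Wt(\Xh\times\X|\theta)=\int\min\{\mathcal N(dw|0,I),\mathcal N(dw|-\theta,I)\}$, i.e., the overlap of two unit-covariance Gaussians with means $0$ and $-\theta$. I would evaluate it by splitting $\mathbb{R}^n$ along the separating hyperplane $\{w:\norm{w}^2=\norm{w+\theta}^2\}=\{w:w\T\theta=-\tfrac12\norm{\theta}^2\}$; by the reflection symmetry of the two densities about $-\theta/2$ the overlap equals $2\,\mathbb{P}_{w\sim\mathcal N(0,I)}\!\left(w\T\theta<-\tfrac12\norm{\theta}^2\right)$, and since $w\T\theta\sim\mathcal N(0,\norm{\theta}^2)$ this equals $2\,\cdf{-\tfrac12\norm{\theta}}$. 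Thus $\delta(\theta)=1-2\,\cdf{-\tfrac12\norm{\theta}}$ certifies (b) for each fixed $\theta$, and taking the supremum over the bounded polytope $\Theta$ gives the uniform constant~\eqref{eq:global_delta}, so $\widehat{\M}\preceq^{\delta}_\eps\M(\theta)$. Finally, for the valid control refinement I would substitute $x^\Delta_k=u^\Delta_k=0$ into~\eqref{eq:statemap} to get $\Refi(\dxhp|\cdots)=\delta_{\xp}(\dxhp)$, so the composed measure of Def.~\ref{def:validrefinement} is $\bar{\Refi}(\dxhp\times\dxp|\cdots)=\delta_{\xp}(\dxhp)\,\mathcal N(\dxp|Ax+B\ac{k}+\theta,I)$, a probability measure concentrated on $\R$ whose second marginal is $\mathcal N(\cdot\,|\,Ax+B\ac{k}+\theta,I)$; since $\Wt$ is also concentrated on $\R$ with second marginal dominated by that Gaussian, $\bar{\Refi}(A|\cdots)\ge\Wt(A|\cdots)$ holds for every measurable $A\subset\Xh\times\X$, and $\Refi$ in~\eqref{eq:statemap} plainly does not depend on $\theta$. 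The only genuinely computational step is the Gaussian-overlap integral; everything else is bookkeeping with pushforwards, the diagonal relation, and routine kernel measurability.
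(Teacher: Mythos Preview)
Your proposal is correct and follows exactly the approach the paper sets up in the text preceding Thm.~\ref{thm:linear}: the interface $\ac{k}=\ach{k}$, the identity relation~\eqref{eq:relation_linear}, the noise coupling $\hat w_k\equiv w_k+\theta$, the state mapping~\eqref{eq:statemap}, and the sub-probability kernel~\eqref{eq:subprobcoup} are all prescribed there, and conditions (a) and (c) of Def.~\ref{def:subsim} are already verified in the paper. The paper itself omits the proof (``Due to space limitations, the proofs of statements will be provided in an extended journal version''), so your verification of condition~(b) via pushforward monotonicity of the two marginals, the Gaussian-overlap computation along the separating hyperplane $w^\top\theta=-\tfrac12\norm{\theta}^2$, and the diagonal-concentration argument for Def.~\ref{def:validrefinement} are precisely the details the paper defers; they are all sound and are the natural completion of the paper's sketch.
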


\subsection{Simulation relations for nonlinear systems}
%In this subsection we consider nonlinear systems described by the stochastic difference equations
Consider the nonlinear system
\begin{equation}\label{eq:nonlinear_sys}
	\M(\theta): \left\{ \begin{array}{ll}
		x_{k+1} &= f(x_k, u_k;\theta)+w_k\\
		y_k &= h(x_k),
	\end{array} \right.
\end{equation}
 with uncertain parameter $\theta\in\Theta$, and the nominal model
\begin{equation}
	\label{eq:nonlinear_nom}
	\Mh: \left\{ \begin{array}{ll}
		\hat x_{k+1} &=   f(\hat x_{k},\hat u_k; \theta_0) + \hat  w_k\\
		\hat y_k &=    h(\hat x_k),
	\end{array} \right.
\end{equation}
with nominal parameter $\theta_0\in\Theta$. 
%\pdfmargincomment{Don't we have to say that assumption 1 is required, i.e., that theta nominal in Theta?}
%Here, $\bd$ is used to account for the inaccuracy introduced by gridding the state space (c.f. \cite{haesaert2020robust}).
%, hence $\norm{x-\hat x^*}\leq\bd$ where $\hat x^*$ is the closest point in $\Xh$ to $x\in\X$.
Assume that $w_k\sim \mathcal N(0,I)$ and $\hat w_k\sim \mathcal N(0,I)$.
%
\begin{comment}
The respective stochastic transition kernels are given as
\begin{align}
	\begin{split}\label{eq:tr_nonlin}
		\Tr(\dxp|x,u) &= \mathcal N(\dxp|f(x_k, u_k;\theta), I)\\
		&	= \int_w \fdel{\dxp|w;\theta} \mathcal N(dw|0, I),
	\end{split}
\end{align}
with $\fdel{\dxp|w;\theta}:= \delta_{f(x_k, u_k;\theta)+w}(\dxp)$,
and
\begin{align}
	\begin{split}\label{eq:tr_hat_nonlin}
		\hat\Tr(\dxhp|\hat x,\hat u)& = \mathcal N(\dxhp|f(\hat x_{k},\hat u_k; \theta_0), I)\\
		&	= \int_{\hat w} \fhdel{\dxhp|\hat w} \mathcal N(d\hat w|0, I),
	\end{split}
\end{align}
with $\fhdel{\dxhp|\hat w}:=\delta_{f(\hat x_{k},\hat u_k; \theta_0) + \hat  w_k}(\dxhp)$.
\end{comment}
%
We can rewrite the model in Eq.~\eqref{eq:nonlinear_sys} as
\begin{equation*}
	x_{k+1} =\underbrace{f(x_k, u_k;\theta_0)}_{\text{nominal dynamics}} + \underbrace{ f(x_k, u_k;\theta)-f(x_k, u_k;\theta_0)+w_k}_{\text{disturbance}}.
\end{equation*}
Note that the disturbance part consists of a deviation caused by the unknown parameter $\theta$ and a deviation caused by the noise $w_k$.
Let us assume that we can bound the former as
\begin{equation}
%\label{eq:bound_offset}
	\|f(x, u;\theta)-f(x, u;\theta_0)\|\leq d(x, u)\quad  \forall x, u,  \theta.
	\label{eq:disturbance_bound}
\end{equation}
Using the interface function $\ac{k} = \ach{k}$ and the noise coupling
\begin{align}
 & \hat w_k \equiv  \gamma(x_k, u_k,\theta)+w_k \,\,\text{ with }\nonumber\\
 & \gamma(x, u,\theta) := f(x, u;\theta)-f(x, u;\theta_0),
\label{eq:gamma}
\end{align}
we get the state mapping
\begin{equation}
\label{eq:statemap_nonlin}
\xhp = \xp - f(x, u;\theta_0) + f(\hat x,\hat u; \theta_0),
\end{equation}
that is not dependent on $\theta$.
%\begin{equation}
%\label{eq:statemap_nonlin}
%	\xhp \sim \Refi(\cdot |\xh{k}, \x{k}, \x{k+1},\ach{k}) = \delta_{\xp - f(x_k, u_k;\theta_0) + f(\hat x_{k},\hat u_k; \theta_0)}(d\xhp).
%\end{equation}
%
To establish an SSR $\widehat{\M}\preceq^{\delta}_\eps\M(\theta)$, we select the identity relation \eqref{eq:relation_linear}.
Condition (a) of Def.~\ref{def:subsim} holds by setting the initial states $\hat x_{0} = x_{0}$. Condition (c) is satisfied with $\eps=0$ since both systems use the same output mapping. For condition (b), we define the sub-probability coupling $\Wt(\cdot|\theta)$ over $\R$:
\begin{align}
%	\begin{split}
		&\Wt(d\xhp\times d\xp|\theta) =  \int_{\hat w}\int_w \fhdel{\dxhp|\hat w} \fdel{\dxp|w;\theta}\nonumber \\
		&\hspace{50pt}\times \delta_{\offset+w}(d\hat w) \min\{\mathcal N(dw|0,I),\mathcal N(dw|-\offset, I)\},%\nonumber
		\label{eq:subprobcoup_nonlin} 
%	\end{split}
\end{align}
that takes the minimum of two probability measures. Note that we dropped the arguments of $\offset$ to lighten the notation.

\begin{theorem}
\label{thm:nonlinear}
For the nonlinear models in \eqref{eq:nonlinear_sys}-\eqref{eq:nonlinear_nom}, we have that $\widehat{\M}\preceq^{\delta}_\eps\M(\theta)$ with interface function $\ac{k} = \ach{k}$, relation \eqref{eq:relation_linear}, and the sub-probability kernel \eqref{eq:subprobcoup_nonlin}. The state mapping \eqref{eq:statemap_nonlin} defines a valid control refinement with $\eps=0$ and
\begin{equation}
\label{eq:statewise_delta}
	\delta(\xh{k},\ach{k}) = \sup_{\theta\in\Theta}\left\lbrace  1 - 2\cdot\cdf{-\frac{1}{2}\norm{\offset(\xh{k},\ach{k},\theta)}}\right\rbrace,
\end{equation}
and $\gamma$ defined in \eqref{eq:gamma}.
%\begin{proof}
%	See Appendix~\ref{app:proof_validspc}.
%\end{proof}
\end{theorem}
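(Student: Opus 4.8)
The plan is to recognize that this statement is the nonlinear counterpart of Theorem~\ref{thm:linear}, obtained by replacing the constant offset $\theta$ of the linear case with the state-and-input-dependent deviation $\gamma(\x{k},\ac{k},\theta)=f(\x{k},\ac{k};\theta)-f(\x{k},\ac{k};\theta_0)$, and that the kernel \eqref{eq:subprobcoup_nonlin} is a translated maximal coupling of two Gaussians $\mathcal N(\cdot|m,I)$ and $\mathcal N(\cdot|m+\gamma,I)$ with common covariance $I$. All the verification is then pointwise in $(\hat x,\ach{})$. So first I would fix $(\hat x,x)\in\R$ (hence $\hat x=x$) and $\ach{}\in\Ah$, set $\ac{}=\ach{}$ via the interface, and write $m:=f(\hat x,\ach{};\theta_0)$ and $\gamma:=\gamma(x,\ac{},\theta)$, so that $\Trh(\cdot|\hat x,\ach{})=\mathcal N(\cdot|m,I)$ and $\Tr(\cdot|x,\ach{};\theta)=\mathcal N(\cdot|m+\gamma,I)$. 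Conditions (a) and (c) of Def.~\ref{def:subsim} are then immediate: (a) follows from $\hat x_0=x_0$, and (c) holds with $\eps=0$ since on $\R$ we have $\hat x=x$ and both models use the same output map $h$, so $\mathbf d_\Y(h(\hat x),h(x))=0$.

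Second, I would verify condition (b), i.e.\ that \eqref{eq:subprobcoup_nonlin} is a sub-probability coupling over $\R$ in the sense of Def.~\ref{def:submeasure_lifting}. Carrying out the inner integrals — integrating the Dirac kernels $\fhdel{\dxhp|\hat w}=\delta_{m+\hat w}(\dxhp)$ and $\fdel{\dxp|w;\theta}=\delta_{m+\gamma+w}(\dxp)$ against $\delta_{\gamma+w}(d\hat w)$ — forces $\hat w=\gamma+w$ and collapses $\Wt$ onto the diagonal $\xhp=\xp=m+\gamma+w$, which gives $\Wt(\Xh\times\X)=\Wt(\R)$, i.e.\ Def.~\ref{def:submeasure_lifting}(a). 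For the two marginal inequalities (b)--(c) I would use that $\min\{\mathcal N(dw|0,I),\mathcal N(dw|-\gamma,I)\}$ is dominated by each of its two arguments, together with the fact that the translation $w\mapsto\gamma+w$ (unit Jacobian) pushes $\mathcal N(\cdot|-\gamma,I)$ to $\mathcal N(\cdot|0,I)$ and $\mathcal N(\cdot|0,I)$ to $\mathcal N(\cdot|\gamma,I)$; after a change of variables the $\Xh$-marginal of $\Wt$ is bounded above by $\mathcal N(\cdot|m,I)=\Trh(\cdot|\hat x,\ach{})$ and the $\X$-marginal by $\mathcal N(\cdot|m+\gamma,I)=\Tr(\cdot|x,\ach{};\theta)$. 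Then I would compute the total mass: $\Wt(\Xh\times\X)=\int\min\{\mathcal N(dw|0,I),\mathcal N(dw|-\gamma,I)\}=1-\mathrm{TV}(\mathcal N(0,I),\mathcal N(-\gamma,I))$, and invoke the closed form of the total-variation distance between equal-covariance Gaussians, $\mathrm{TV}(\mathcal N(0,I),\mathcal N(-\gamma,I))=2\cdf{\tfrac12\norm{\gamma}}-1$, to get $\Wt(\Xh\times\X)=2\cdf{-\tfrac12\norm{\gamma}}$. Hence the coupling deficiency at $(\hat x,\ach{})$ under $\theta$ equals $1-2\cdf{-\tfrac12\norm{\gamma(\hat x,\ach{},\theta)}}$, and taking the supremum over $\theta\in\Theta$ gives exactly \eqref{eq:statewise_delta}; this settles (b) and the claimed $\delta$, so $\Mh\preceq^{\delta}_\eps\M(\theta)$ with $\eps=0$.

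Third, I would check Def.~\ref{def:validrefinement} for the state mapping \eqref{eq:statemap_nonlin}. On $\R$ with $\ac{}=\ach{}$ the $\theta_0$-terms in \eqref{eq:statemap_nonlin} cancel, so $\Refi(\cdot|\hat x,x,\xp,\ach{})=\delta_{\xp}(\cdot)$ and the composed measure of Def.~\ref{def:validrefinement} is $\bar{\Refi}(d\xhp\times d\xp)=\delta_{\xp}(d\xhp)\,\mathcal N(d\xp|m+\gamma,I)$, supported on the diagonal with diagonal image $\mathcal N(\cdot|m+\gamma,I)$. Running the same substitution $\xp=m+\gamma+w$ inside $\Wt$ identifies its diagonal image as $\min\{\mathcal N(\cdot|m+\gamma,I),\mathcal N(\cdot|m,I)\}\le\mathcal N(\cdot|m+\gamma,I)$; therefore $\bar{\Refi}(A)\ge\Wt(A)$ for every measurable $A\subset\Xh\times\X$, which is \eqref{eq:sub_prob}. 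Since the interface $\ac{}=\ach{}$ and the state mapping \eqref{eq:statemap_nonlin} are manifestly free of $\theta$, this is a parameter-independent valid control refinement, completing the argument.

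The step I expect to be the main obstacle is the bookkeeping in the two marginal inequalities of the coupling: one has to track how the single translation by $\gamma$ must simultaneously relate the overlap mass $\min\{\mathcal N(0,I),\mathcal N(-\gamma,I)\}$ to the two different target marginals $\mathcal N(m,I)$ and $\mathcal N(m+\gamma,I)$, and be careful about the order in which the three Dirac kernels in \eqref{eq:subprobcoup_nonlin} are composed. The only genuinely analytic ingredient is the equal-covariance Gaussian total-variation identity, which produces the $\cdf{-\tfrac12\norm{\gamma}}$ term; everything else is the linear-case computation reread with $\theta$ replaced by the pointwise deviation $\gamma(\x{},\ac{},\theta)$. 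Note that the bound \eqref{eq:disturbance_bound} is not needed for the statement as written, since the theorem keeps the exact $\delta$ through the supremum over $\Theta$; it would only be invoked afterwards to replace that supremum by the a priori bound $d(\cdot)$.
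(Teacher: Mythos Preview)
The paper does not actually include a proof of this theorem (it states that ``Due to space limitations, the proofs of statements will be provided in an extended journal version''), so there is no in-paper argument to compare against line by line. That said, your proposal is correct and follows precisely the scaffolding the paper lays out immediately before the theorem: it is the verbatim verification that the interface $\ac{}=\ach{}$, the identity relation \eqref{eq:relation_linear}, the coupling \eqref{eq:subprobcoup_nonlin}, and the state mapping \eqref{eq:statemap_nonlin} satisfy Defs.~\ref{def:submeasure_lifting}--\ref{def:validrefinement}, obtained by rerunning the linear computation of Theorem~\ref{thm:linear} with the constant offset $\theta$ replaced by the state-dependent $\gamma(\hat x,\ach{},\theta)$. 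Your handling of the three Dirac integrations, the two marginal dominations via $\min\le$ either argument plus the unit-Jacobian shift $w\mapsto w+\gamma$, the total-variation identity for equal-covariance Gaussians, and the observation that on $\R$ the mapping \eqref{eq:statemap_nonlin} collapses to $\xhp=\xp$ so that $\bar\Refi$ dominates $\Wt$ on the diagonal, are all the intended steps. Your closing remark that \eqref{eq:disturbance_bound} is not needed for the theorem as stated is also accurate; the paper uses that bound only downstream, in the case study, to evaluate the supremum over $\Theta$.
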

This theorem gives a $\delta$ that depends on $(\xh{k},\ach{k})$. The supremum of this quantity over $\Xh\times\Uh$ could be obtained to get a constant global $\delta$, which is more conservative.

\section{Case studies}
\label{sec:case_study}
%Now that we established the theoretical foundations of designing a valid control refinement for a parametric stochastic system, thus able to design a controller such that the controlled system satisfies a given temporal logic specification with at least probability $p$ (c.f. Prob.~\ref{prob:prob1}),
We demonstrate the effectiveness of our approach on a linear system and the nonlinear discrete-time version of the Van der Pol Oscillator.

%First, we consider the linear system introduced in Sec.~\ref{sec:simrel_affine} and compute a controller in Sec.~\ref{sec:lti_casestudy}. After that we look at a time-discrete version of the Van der Pol Oscillator in Sec.~\ref{sec:vdp}, which is a nonlinear stochastic system.

\subsection{Linear system}
\label{sec:lti_casestudy}
	Consider the linear systems in Eqs.~\eqref{eq:lti_uncertain}-\eqref{eq:lti_nom} with $A:=[0.9,0;0,0.9]$, $B:=[0.7,0;0,0.7]$, and $C:=[1,0;0,1]$. Let $\hat w_k, w_k\sim \mathcal N(0,0.5I)$. Define the state space $\X = [-10,10]^2$, input space $\U = [-1,1]$, and output space $\Y = \X$.
The goal is to compute a controller for the system to avoid region $P_2$ and reach region $P_1$. The specification is $\psi = P_3 \Until P_1$ with $P_3 := \X \setminus P_2$. Here, we consider
$P_1 =[4,10]\times[-4,0]$
and $P_2 =[4,10]\times[0,4]$.
%$P_1 = \left\{(x_1,x_2)\T \in \X \mid 4 \leq x_1 \leq 10, -4 \leq x_2 \leq 0 \right\}$,
%and $P_2 = \left\{(x_1,x_2)\T \in \X \mid 4 \leq x_1 \leq 10, 0 \leq x_2 \leq 4 \right\}$.
The uncertainty set is $\Theta = \left[-0.090,0.090\right]^2$ with the nominal parameter $\theta_0 = (0,0)$. %\new{Note that the nominal model $\Mh(\theta_0)$ hence belongs to the set of models $\{\M(\theta)$ with $\theta\in\Theta\}$.} \Sadegh{eliminate.}
% \Sadegh{why this is one dimensional?}

Using the nominal model as $\Mh$ and Eq.~\eqref{eq:global_delta}, we get that $\Mh\preceq^{\delta_1}_{\eps_1}\M(\theta)$ with $\delta_1=0.051$ and $\eps_1 = 0$.
%Remember that because of the choice of relation $\R$ we can set $\varepsilon_\abssone$ to zero for this abstraction stage.
We compute a second abstract model $\Mt$ by discretizing the space of $\Mh$. Then, we use the results of \cite{haesaert2020robust} to get ${\Mt}\preceq^{\delta_2}_{\eps_2}\Mh$ with $\eps_2=0.950$ and $\delta_2=0$. Thus, we have $\Mt\preceq^{\delta}_{\eps}\M(\theta)$ with $\delta = \delta_1+\delta_2$ and $\eps = \eps_1 + \eps_2$.
The probability of satisfying the specification with this $\eps$ and $\delta$ is computed based on Prop.~\ref{prop:delepsreach_infty} and is depicted in Fig.~\ref{fig:val_fcn_lti} as a function of the initial state of $\M$.

%\Oliver{Here, the SSR is used in accordance with the definition in Def.~\ref{def:subsim}, but it is used the other way around in Thm.~\ref{thm:transitive}.Resolved}

	%Note that for larger uncertainty sets $\Theta$,
%	 -- i.e., in the general case, increasing upper bounds $d$ in \eqref{eq:bound_offset} --
	% the satisfaction probability flattens out for large parts of the state space. Hence, obtaining an appropriately sized interval for the uncertain parameters is crucial for finding a useful controller.
	%\Sadegh{Do we have any evidence for this? Otherwise eliminate due to space.}
\begin{figure}
	\centering
	\includegraphics[width=.9\columnwidth]{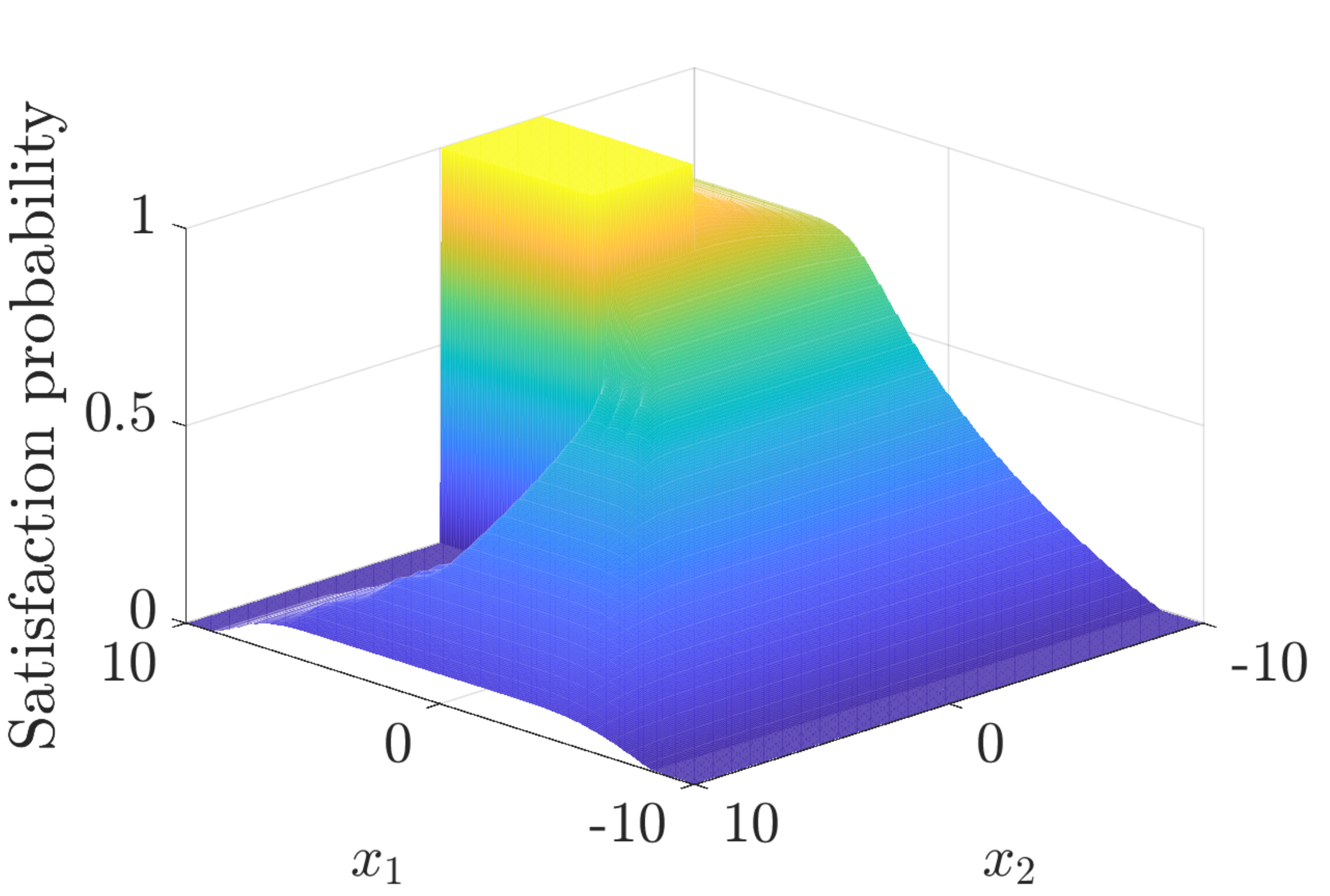}
	\caption{Satisfaction probability for the linear system with $\varepsilon=0.950$, and $\delta=0.051$.
	% with $\Theta = \left[-0.090,0.090\right]^2$, nominal parameter $\theta_0=(0,0)$,
	%For larger intervals of $\theta$ the fraction of the state space with satisfaction probability greater zero decreases significantly.
	}
	\label{fig:val_fcn_lti}
\end{figure}

\subsection{Van der Pol Oscillator}\label{sec:vdp}
%Consider as a second example a forced, stochastically perturbed Van der Pol Oscillator in discrete time.
The state evolution of the Van der Pol Oscillator in discrete time is given as
\begin{align*}
%	\begin{split}
		x_{1,k+1} &= x_{1,k} + x_{2,k} \tau+w_{1,k}
%		\label{eq:vdp_dyn}
		\\
		x_{2,k+1} & = x_{2,k} + (-x_{1,k}+\theta(1-x_{1,k}^2)x_{2,k}) \tau+ u_k+w_{2,k},
%		\nonumber
%	\end{split}
\end{align*}
with sampling time $\tau = 0.1$ and $\hat w_k, w_k\sim \mathcal N(0,0.2I)$.
We consider the state space $\X = \mathbb R^2$, input space $\U = [-1,1]$, and output equation $y_k=x_k$. We want to design a controller such that the system remains inside region $P_1$ while reaching region $P_2$, written as $\psi = P_1 \Until P_2$. The regions are $P_1=[-3,3]^2$ and $P_2 = [2,3]\times[-1,1]$. The uncertainty set is $\Theta = \left[0.700,1.300\right]$ with nominal value $\theta_0=1$.
We select $\Mh$ to be the nominal model and get $\eps_1=0$.
We calculate a state-dependent upper bound on $\offset$ as in Eqs.~\eqref{eq:disturbance_bound}-\eqref{eq:gamma} using $x_k = \hat x_k$,
%\new{(with no dependence on the control input here)}, \Sadegh{eliminate}
\begin{equation*}
	d(\xh{k}) = \tau \sup_{\theta\in\Theta} |\theta_0-\theta|\, (1-\xh{1,k}^2)\xh{2,k}.
\end{equation*}
The state-dependent $\delta_1$ is computed using Eq.~\eqref{eq:statewise_delta} and is displayed in Fig.~\ref{fig:deltas} (top). Note that $\delta_1$ grows rapidly for states away from the central region, thus a global upper bound on $\delta_1$
%i.e. $\sup_{\xh{}\in\Xh,\ach{}\in\Uh}\delta_\abssone(\xh{})$,
would result in a poor satisfaction probability.
We compute a second abstract model $\Mt$ by discretizing the space of $\Mh$ using the method outlined in \cite{huijgevoort2022piecewiseaffineabstraction}. Then, we use the results therein to get ${\Mt}\preceq^{\delta_2}_{\eps_2}\Mh$ with $\eps_2=0.1$ and $\delta_2(\hat{x})$ given in Fig.~\ref{fig:deltas} (bottom).
Using the transitivity property in Thm.~\ref{thm:transitive}, we have $\Mt\preceq^{\delta}_{\eps}\M(\theta)$ with $\delta = \delta_1+\delta_2$ and $\eps = \eps_1 + \eps_2$.
The probability of satisfying the specification with this $\eps$ and $\delta$ is computed based on Prop.~\ref{prop:delepsreach_infty} and is given in Fig.~\ref{fig:vdPol_SatProb} as a function of the initial state of $\M(\theta)$.

\begin{figure}
	\centering
	\includegraphics[width=0.9\columnwidth]{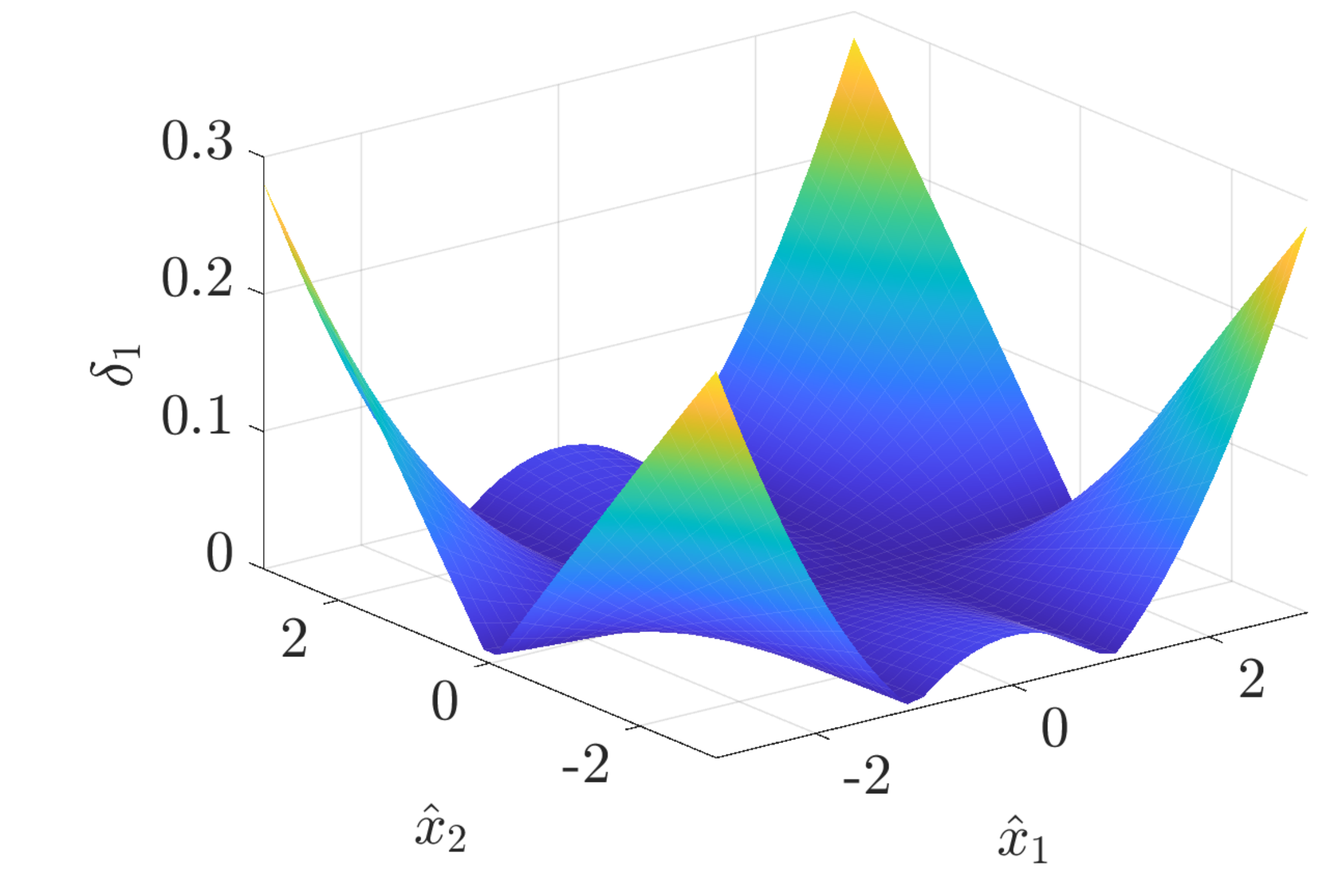}\\
%	\vspace{0.1cm}
	%\label{fig:delta}
%	\includegraphics[width=0.7\columnwidth]{}
	\vspace{0.1cm}
	\includegraphics[width=0.9\columnwidth]{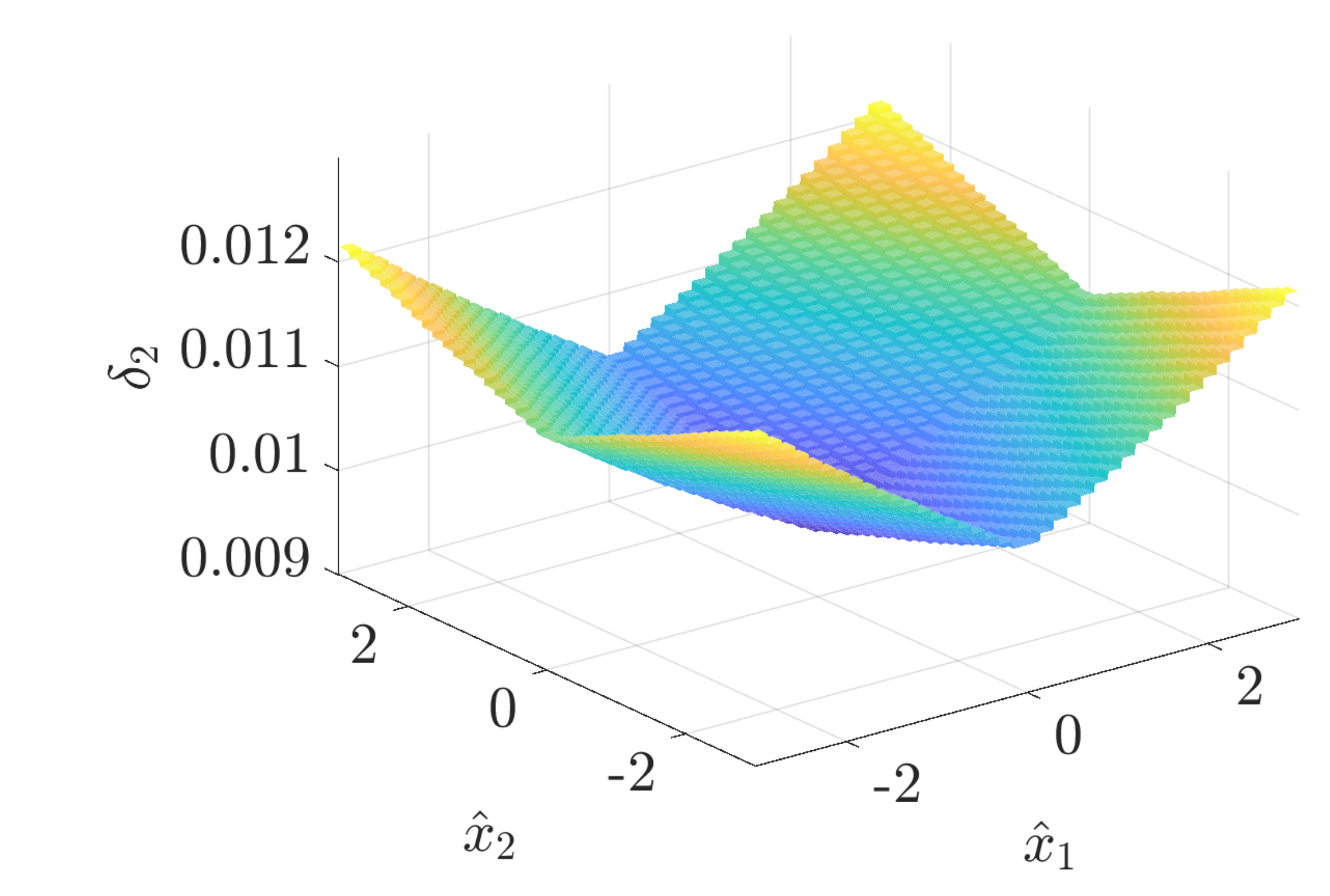}
	%\label{fig:vdPol_delta2}
	\caption{Values of $\delta_1$ (top) and $\delta_2$ (bottom) as a function of the initial state for the Van der Pol Oscillator. %\Sadegh{Oliver, modify the figure and fix $\delta$.}
	%Since $\delta_1$ is rapidly growing towards 1 for states away from the central region of the state space, it makes sense to calculate $\delta_1$ state dependently.
	}
	\label{fig:deltas}
\end{figure}

\begin{comment}
\begin{figure*}[!tbp]
	\centering
	\subfloat[$\delta_{\M\rightarrow\Mh}$]{\includegraphics[width=0.8\columnwidth]{VDPDeltaOverStateSpace.pdf}\label{fig:delta}}
\quad
	\subfloat[$\delta_{\Mh\rightarrow\Mt}$]{\includegraphics[width=\columnwidth]{vdPol_delta2.pdf}\label{fig:vdPol_delta2}}
	\caption{Values of $\delta_{\M\rightarrow\Mh}$ (a) and $\delta_\absstwo$ (b) for different points in the state space. Since $\delta_{\M\rightarrow\Mh}$ is rapidly growing towards 1 for states away from the central region of the state space, it makes sense to calculate $\delta_{\M\rightarrow\Mh}$ state dependently.%\Oliver{Use either these two plots or \ref{fig:delta} and \ref{fig:vdPol_delta2}.}
	}
	\label{fig:deltas}
\end{figure*}
\end{comment}

%\begin{figure}[h]
%	\centering
	%		\def\svgwidth{\columnwidth}
	%		\input{value_function_LTI_eps0001.pdf_tex}
%	\includegraphics[width=\columnwidth]{VDPDeltaOverStateSpace.pdf}
%	\caption{Values of $\delta_{\M\rightarrow\Mh}$ for different points in the state space. Since $\delta_{\M\rightarrow\Mh}$ is rapidly growing towards 1 for states away from the central region of the state space, it makes sense to calculate $\delta_{\M\rightarrow\Mh}$ state dependently.}
%	\label{fig:delta}
%\end{figure}

%\begin{figure}[h]
%	\centering
%	\includegraphics[width=\columnwidth]{vdPol_delta2.pdf}
%	\caption{Values of $\delta_\absstwo$ for different points in the state space.}
%	\label{fig:vdPol_delta2}
%\end{figure}

\begin{figure}
	\centering
	\includegraphics[width=.9\columnwidth]{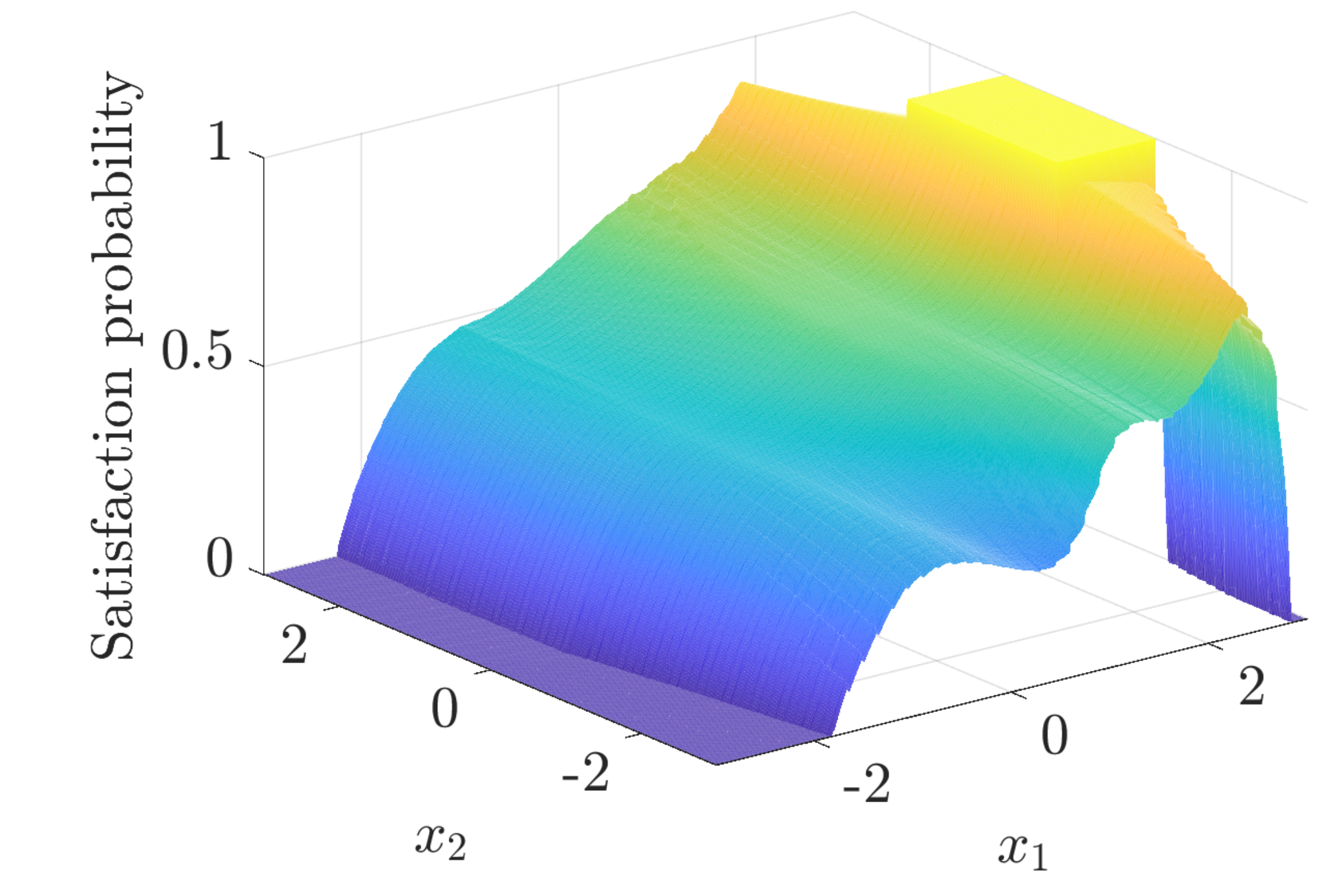}
	\caption{Lower bound on the reachability probability as a function of the initial state for the Van der Pol Oscillator.
	%with $\eps = \eps_1+ \eps_2$ and $\delta = \delta_1+ \delta_2$.
	%with uncertain set $\Theta = \left[-0.300,0.300\right]$ and
	}
	% \Birgit{@Oliver, can you make this picture the same layout as the other ones?}}
	\label{fig:vdPol_SatProb}
\end{figure}

\section{CONCLUSIONS \& FUTURE WORK}
\label{sec:concl}
In this paper, we presented a new similarity relation for stochastic systems that can establish a quantitative relation between a parameterized class of models and a simple abstract model. We showed that this relation can be established on both linear and nonlinear parameterized systems, and provided a method for designing controllers that are robust with respect to errors quantified by the similarity relation to satisfy temporal specifications. In the future, we plan to extend the class of specifications and provide the implementation of the approach for general nonlinear systems.

\bibliographystyle{abbrv}

\bibliography{references.bib}
\end{document}